\tikzstyle{mybox} = [draw=black, very thick, rectangle, rounded corners, inner ysep=5pt, inner xsep=5pt]
\numberwithin{equation}{section}
\title{\textbf{Ballistic electron transport described by 
a generalized Schr\"{o}dinger equation}}
\date{}
\author{Giulia Elena Aliffi\thanks{University of Catania, Department of Mathematics and Computer Science, Catania, Italy  ({\tt giuliaelena.aliffi@phd.unict.it}).}
\and 
Giovanni Nastasi\thanks{University of Enna "Kore", Department of Engineering and Architecture, Enna, Italy ({giovanni.nastasi@unikore.it}).}
\and Vittorio Romano\thanks{University of Catania, Department of Mathematics and Computer Science, Catania, Italy  (Corresponding author: {\tt vittorio.romano@unict.it}).} }
\newcommand{\R}{\mathbb{R}}
\newcommand{\C}{\mathbb{C}}
\newcommand{\N}{\mathbb{N}}
\newtheorem{proposition}{Proposition}
\newtheorem{lemma}{Lemma}
\newtheorem{remark}{Remark}
\begin{document}

\maketitle

\begin{abstract}
We propose a Schrödinger equation of arbitrary order for modeling charge transport in semiconductors operating in the ballistic regime. This formulation incorporates non-parabolic effects through the Kane dispersion relation, thereby extending beyond the conventional effective mass approximation. Building upon the framework introduced in G.E. Aliffi, G. Nastasi, V. Romano,  {ZAMP} {76}, 155 (2025), we derive a hierarchy of models, each governed by a Schrödinger equation of increasing order.
As in the standard second-order formulation, the problem is formulated on a bounded spatial domain with suitable transparent boundary conditions. These conditions are designed to simulate charge transport in a quantum coupler where an active region --~representing the electron device~-- is connected to leads acting as reservoirs.
We investigate several analytical properties of the proposed models and derive a generalized expression for the current, valid for any order. This formula includes additional terms that account for interference effects arising from the richer wave structure inherent in higher-order Schrödinger equations, which are not captured by the effective mass approximation.
Numerical simulations of a resonant tunneling diode (RTD) illustrate the key features of the solutions and highlight the impact of the generalized formulation on device behaviour.
\end{abstract}

\noindent {\em MSC2020}: {81Q05, 35J10, 34L40}\\
{\em Keywords}: {High order Schr\"{o}dinger Equation, Non parabolic dispersion relation, Resonant Tunneling Diode, Transparent Boundary Conditions}
 \vskip 1cm
 {\bf Author Contribution declaration} 
 
 Giovanni Nastasi: Formal analysis, Investigation, Methodology, Supervision, Validation,
Visualization, Writing – original draft, Writing .

Giulia Aliffi: Formal analysis, Investigation, Methodology, Validation,
Visualization, Writing – original draft, Writing .

Vittorio Romano: Conceptualization, Formal analysis, Funding acquisition, Investigation, Methodology, Supervision, Validation, Writing
– original draft, Writing .


\maketitle

\newpage

\section{Introduction}

The ongoing trend of enhanced miniaturization in semiconductor technology has made quantum effects essential for understanding the full range of charge transport phenomena in electron devices, especially those with active regions spanning only a few nanometers and featuring abrupt potential variations \cite{HaVa,Da}. While the semiclassical Boltzmann equation remains highly accurate for devices with characteristic scales on the order of microns, it fails to capture quantum tunneling effects, necessitating the use of fully quantum mechanical models.
A widely studied benchmark for quantum transport is the resonant tunneling diode (RTD), a heterostructure typically composed of alternating layers of GaAs and AlGaAs, which generate potential barriers. Various quantum approaches have been employed to model charge transport in RTDs, including the Wigner equation—often solved using signed Monte Carlo particle methods \cite{NeKoSeRiFe,QuDo,Mu,Ja}—and the non-equilibrium Green’s function formalism \cite{LaDa}.
Under the assumption of ballistic transport, an alternative and direct approach involves solving the Schrödinger equation to simulate electron flow in RTDs \cite{LeKi,AbDeMa,Ab,Ju}. As proposed in \cite{LeKi} (see also \cite{Fre}), the RTD is modeled as being connected to contacts that act as waveguides, allowing electrons to enter and exit the active region. A key idea is the use of transparent boundary conditions, which reduce the problem—originally posed on the entire space—to a boundary value problem defined on a compact domain.
Most existing simulations adopt the standard effective mass approximation and inject electrons according to Fermi–Dirac statistics \cite{MeJuKo,HeWa,Pi,Ar,ArNe}. However, this approximation corresponds to a parabolic band structure, which is known to overestimate the electric current \cite{Ro}. This motivates the inclusion of more accurate dispersion relations in the Schrödinger equation.
The Kane dispersion relation \cite{Lun,CaMaRo_book} offers a refined analytical model that improves upon the parabolic band approximation while avoiding the computational complexity of full-band models \cite{FiLa,FiLabis}, which are typically accessible only through numerical methods.

In \cite{AliNaRo}, a correction of order $\hbar^4$ was introduced into the dispersion relation, leading to a fourth-order Schrödinger equation (SE) whose analytical properties were thoroughly investigated. Simulations involving single and double potential barriers with an applied bias revealed interference effects absent in the second-order SE. In this work, we generalize that approach to arbitrary order in the expansion of the Kane dispersion relation. We derive a continuity equation for the single-electron probability density and a general expression for the probability current valid at any order. Transparent boundary conditions are also formulated for higher-order models.
We investigate several properties of the proposed framework and present numerical examples to illustrate its behavior.
The structure of the paper is as follows. In Section \ref{sec:SE}, we derive the generalized Schrödinger equation and discuss the functional setting. Section \ref{sec:curr} is devoted to the derivation of a general expression for the probability current. In Section \ref{sec:TBC}, we formulate transparent boundary conditions for arbitrary order, enabling the reduction of the problem to a compact domain. Section \ref{sec:well} is dedicated to the well-posedness of the resulting boundary value problem. Section \ref{sec:comparison} compares the probability current for the second- and fourth-order SE. Finally, Section \ref{sec:RTD} presents numerical simulations of an RTD to highlight key features of the proposed models.

\section{Dispersion relation and generalized Schr\"{o}dinger equation} \label{sec:SE}

The general form of the dispersion relation in a semiconductor is derived by solving the single-electron Schrödinger equation under a periodic potential, employing Bloch's theorem \cite{Ja,Ju,Lun}. The complete band structure can only be obtained numerically, for instance using pseudopotential methods \cite{FiLabis,CheCo}. However, in practical applications, analytical approximations are often preferred \cite{Da}. Among these, the Kane dispersion relation is widely used. In its isotropic form, it is given implicitly by
\begin{equation}
\epsilon(k)\left(1 + \alpha \epsilon(k)\right) = \frac{\hbar^2 k^2}{2m^*} := \gamma^2,
\end{equation}
where $\epsilon(k)$ denotes the electron energy measured from the bottom of the conduction band $E_C$, $k$ is the magnitude of the electron wave vector, and $m^*$ is the effective electron mass. For example, in GaAs, $m^* = 0.067 m_e$, with $m_e$ being the free electron mass. The positive parameter $\alpha$ is referred to as the non-parabolicity factor. In the limit $\alpha \to 0^+$, the standard parabolic band approximation is recovered:
\begin{equation}
\epsilon(k) = \gamma^2.
\end{equation}

For $\alpha \neq 0$, the energy can be expressed explicitly as
\begin{equation}
\epsilon = \frac{-1 + \sqrt{1 + 4\alpha \gamma^2}}{2\alpha}. \label{dispersion}
\end{equation}

Given $\beta \in \mathbb{R}$, we recall the binomial expansion
\begin{equation}
(1 + x)^\beta = \sum_{j=0}^{\infty} \binom{\beta}{j} x^j,
\end{equation}
where $\binom{\beta}{j}$ denotes the generalized binomial coefficient, i.e.
\[
\binom{\beta}{j} = 
\begin{cases}
\dfrac{\beta (\beta - 1) \cdots (\beta - j + 1)}{j!} & \text{if } j \in \mathbb{N}, \\\\
1 & \text{if } j = 0.
\end{cases}
\]

Using this expansion, we obtain a power series representation of the energy:
\begin{equation}
\epsilon = \sum_{j=1}^{\infty} c_j \alpha^{j-1} \gamma^{2j},
\end{equation}
where the coefficients are defined as $c_j = \displaystyle{\binom{1/2}{j} 2^{2j - 1}}$.

By associating $\epsilon$ with the quantum mechanical operator corresponding to the momentum in coordinate representation,
\[
p = \hbar k \quad \longrightarrow \quad P = -i\hbar \nabla,
\]
we obtain the operatorial form of the energy:
\begin{equation}
\epsilon \rightarrow \sum_{j=1}^{\infty} c_j \left(-\frac{\hbar^2}{2m^*} \Delta\right)^j \alpha^{j-1}.
\end{equation}

This leads to the following generalized Schrödinger equation (GSE):
\begin{equation}
i\hbar \frac{\partial \Psi}{\partial t}(x,t) = \sum_{j=1}^{\infty} c_j \left(-\frac{\hbar^2}{2m^*}\right)^j \alpha^{j-1} \Delta^j \Psi(x,t) - qV(x)\Psi(x,t). \label{GSE}
\end{equation}

In practical applications, only a finite number of terms in the expansion are retained. The fourth-order equation represents the minimal extension beyond the effective mass approximation, and its analysis has been carried out in \cite{AliNaRo}. A study of higher-order dispersion Schr\"odinger equations can also be found in \cite{KaSha}.

Let $2s$, with $s \in \mathbb{N}$, denote the order of the expansion. The corresponding Hamiltonian is given by
\begin{equation}
\mathcal{H}_{2s} = \sum_{j=1}^{s} c_j \alpha^{j-1} \left( -\frac{\hbar^2}{2m^*} \right)^j \Delta^j - qV(x). \label{H_2s}
\end{equation}

In the cases considered in the following sections, the potential $V(x)$ is typically a real, piecewise regular function, as in the case of a resonant tunneling diode (RTD). This allows us to establish the following property.

\begin{proposition}
If $V$ is real and $V \in L^{\infty}(\mathbb{R}^3)$, then for any $s \in \mathbb{N}$, the Hamiltonian $\mathcal{H}_{2s}$ is well-defined on the Schwartz space $\mathcal{S}(\mathbb{R}^3)$ as a symmetric operator with respect to the scalar product in $L^2(\mathbb{R}^3)$.
\end{proposition}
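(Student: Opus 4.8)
The plan is to split $\mathcal{H}_{2s}$ into its constant--coefficient differential part $\mathcal{D}_{2s} := \sum_{j=1}^{s} c_j \alpha^{j-1}\bigl(-\tfrac{\hbar^2}{2m^*}\bigr)^j \Delta^j$ and the multiplication part $-qV$, and to check well-definedness and symmetry for each piece separately, keeping in mind that $\mathcal{S}(\mathbb{R}^3)$ is dense in $L^2(\mathbb{R}^3)$ so that ``symmetric operator'' is meant in the usual sense of a densely defined operator on the domain $\mathcal{S}(\mathbb{R}^3)\subset L^2(\mathbb{R}^3)$.

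For well-definedness: if $\psi\in\mathcal{S}(\mathbb{R}^3)$ then each iterated Laplacian $\Delta^j\psi$ is again in $\mathcal{S}(\mathbb{R}^3)\subset L^2(\mathbb{R}^3)$, so $\mathcal{D}_{2s}\psi\in L^2(\mathbb{R}^3)$; here I would record that the coefficients $c_j=\binom{1/2}{j}2^{2j-1}$ and $\alpha^{j-1}\bigl(-\tfrac{\hbar^2}{2m^*}\bigr)^j$ are all real, which is exactly what makes the differential part symmetric. For the potential term, $V\in L^\infty(\mathbb{R}^3)$ and $\psi\in L^2(\mathbb{R}^3)$ give $\|V\psi\|_{L^2}\le\|V\|_{L^\infty}\|\psi\|_{L^2}<\infty$, so $V\psi\in L^2(\mathbb{R}^3)$. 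Hence $\mathcal{H}_{2s}$ maps $\mathcal{S}(\mathbb{R}^3)$ into $L^2(\mathbb{R}^3)$ (note $\mathcal{H}_{2s}\psi$ need not lie in $\mathcal{S}(\mathbb{R}^3)$, since $V$ is only bounded).

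For symmetry: given $\phi,\psi\in\mathcal{S}(\mathbb{R}^3)$ I would compute $\langle \Delta^j\phi,\psi\rangle_{L^2}$ by integrating by parts $2j$ times; all derivatives of Schwartz functions decay faster than any power of $|x|^{-1}$, so every boundary term at infinity vanishes and $\langle\Delta^j\phi,\psi\rangle=\langle\phi,\Delta^j\psi\rangle$. Equivalently, one may pass to the Fourier transform, under which $\mathcal{S}(\mathbb{R}^3)$ is invariant and $\Delta^j$ becomes multiplication by the real symbol $(-|\xi|^2)^j$, and conclude by Plancherel. Multiplying by the real constants above and summing the finitely many terms gives $\langle\mathcal{D}_{2s}\phi,\psi\rangle=\langle\phi,\mathcal{D}_{2s}\psi\rangle$. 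Since $V$ is real-valued, $\langle qV\phi,\psi\rangle=\int_{\mathbb{R}^3} qV(x)\overline{\phi(x)}\psi(x)\,dx=\langle\phi,qV\psi\rangle$, the integral converging absolutely by the bound above. Adding the two contributions yields $\langle\mathcal{H}_{2s}\phi,\psi\rangle=\langle\phi,\mathcal{H}_{2s}\psi\rangle$.

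There is no genuine obstacle here: the only points needing care are bookkeeping ones --- verifying that the expansion coefficients are real so no conjugation spoils the symmetry of $\mathcal{D}_{2s}$, confirming that the integration-by-parts boundary terms really vanish (immediate from Schwartz decay), and being precise that the claim concerns a densely defined symmetric operator rather than one mapping $\mathcal{S}(\mathbb{R}^3)$ into itself. If one later wanted self-adjointness (not asserted in this proposition), that would be the hard part: it would require a Kato--Rellich type argument treating $-qV$ as a bounded perturbation of $\mathcal{D}_{2s}$, whose self-adjointness on a suitable Sobolev domain follows from its Fourier multiplier representation.
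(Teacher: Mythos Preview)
Your proposal is correct. The paper actually states this proposition without proof; the closest it comes is the proof of the subsequent Lemma, which establishes symmetry of $\mathcal{H}_{2s}$ on $H^{2s}(\mathbb{R}^3)$ by exactly the same mechanism you use --- repeated integration by parts to move $\Delta^j$ across the inner product, together with realness of $V$ for the multiplication part. Your write-up is in fact more careful than the paper's: you explicitly note that the expansion coefficients $c_j\alpha^{j-1}(-\hbar^2/2m^*)^j$ are real (so no conjugation issue arises), you flag that $\mathcal{H}_{2s}\psi$ need only land in $L^2$ rather than in $\mathcal{S}$, and you offer the Fourier-multiplier/Plancherel alternative, which the paper does not mention at this stage (though it uses the Fourier representation later when proving self-adjointness of the free part). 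There is nothing to correct.
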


The density of $\mathcal{S}(\mathbb{R}^3)$ in $L^2(\mathbb{R}^3)$ allows us to prove the following.

\begin{proposition} \label{prop:self_ad}
If $V$ is real and $V \in L^{\infty}(\mathbb{R}^3)$, then for any $s \in \mathbb{N}$, the Hamiltonian $\mathcal{H}_{2s}$ admits a self-adjoint extension in $H^{2s}(\mathbb{R}^3)$.
\end{proposition}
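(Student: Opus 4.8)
\emph{Proof plan.} The idea is to split $\mathcal{H}_{2s}$ from \eqref{H_2s} into its constant-coefficient differential part and the multiplication by the potential, to diagonalise the former by the Fourier transform, and then to restore the potential as a bounded symmetric perturbation. Write $\mathcal{H}_{2s} = \mathcal{H}_{2s}^{0} + M_{-qV}$, where
\[
\mathcal{H}_{2s}^{0} = \sum_{j=1}^{s} c_j \alpha^{j-1} \left( -\frac{\hbar^2}{2m^*} \right)^{j} \Delta^{j}, \qquad (M_{-qV}\psi)(x) = -qV(x)\psi(x).
\]
Since $V$ is real and $V \in L^{\infty}(\mathbb{R}^3)$, the operator $M_{-qV}$ is bounded and self-adjoint on $L^2(\mathbb{R}^3)$; by the Kato--Rellich theorem a bounded symmetric perturbation (relative bound $0$) preserves self-adjointness and the operator domain, so it is enough to show that $\mathcal{H}_{2s}^{0}$, taken with domain $\mathcal{S}(\mathbb{R}^3)$, is essentially self-adjoint with closure defined on $H^{2s}(\mathbb{R}^3)$. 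The self-adjoint extension of $\mathcal{H}_{2s}$ will then be $\overline{\mathcal{H}_{2s}^{0}} + M_{-qV}$ on $H^{2s}(\mathbb{R}^3)$.

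To treat $\mathcal{H}_{2s}^{0}$ I would pass to the Fourier side. Using $\widehat{\Delta^{j}\psi}(k) = (-1)^{j}|k|^{2j}\widehat{\psi}(k)$, the operator $\mathcal{F}\,\mathcal{H}_{2s}^{0}\,\mathcal{F}^{-1}$ is multiplication by
\[
p(k) = \sum_{j=1}^{s} c_j \alpha^{j-1} \left( \frac{\hbar^2}{2m^*} \right)^{j} |k|^{2j},
\]
i.e.\ the truncation to order $2s$ of the power series of the energy \eqref{dispersion}. Since $p$ is real-valued and continuous, the maximal multiplication operator $M_p$ with domain $\{ g \in L^2(\mathbb{R}^3) : p\,g \in L^2(\mathbb{R}^3)\}$ is self-adjoint. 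The one arithmetic input is that $\alpha > 0$ and the leading coefficient $c_s \alpha^{s-1}(\hbar^2/2m^*)^s$ is nonzero, because $c_s = \binom{1/2}{s}\, 2^{2s-1} \neq 0$ for every $s \in \mathbb{N}$; hence $|p(k)|$ grows exactly like $|k|^{2s}$ as $|k| \to \infty$ and $1 + |p(k)|^2 \asymp (1+|k|^2)^{2s}$ on all of $\mathbb{R}^3$. Therefore the graph norm of $M_p$ is equivalent to the $H^{2s}$-norm, $\mathcal{F}^{-1}D(M_p) = H^{2s}(\mathbb{R}^3)$, and since $\mathcal{S}(\mathbb{R}^3)$ is dense in $H^{2s}(\mathbb{R}^3)$ it is a core for $M_p$, hence (unitarily) for $\mathcal{H}_{2s}^{0}$. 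Pulling this back through $\mathcal{F}$ shows that $\mathcal{H}_{2s}^{0}$ with domain $\mathcal{S}(\mathbb{R}^3)$ is essentially self-adjoint with closure defined on $H^{2s}(\mathbb{R}^3)$, which together with the first step gives the claim.

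I expect the only genuinely substantive step to be the domain identification $\mathcal{F}^{-1}D(M_p) = H^{2s}(\mathbb{R}^3)$: it rests on the non-degeneracy of the top coefficient of $p$ and on controlling $p$ near its zeros at $k=0$ (harmless, since there the unit term in the graph norm dominates). It is worth noting that $\operatorname{sgn}(c_j) = (-1)^{j-1}$, so for even $s$ one has $p(k) \to -\infty$ as $|k| \to \infty$ and $\mathcal{H}_{2s}$ is in general not bounded below; this does not obstruct self-adjointness — multiplication by a real measurable function is self-adjoint on its maximal domain regardless of sign — but it does mean the proof cannot be shortcut through a Friedrichs extension and must go through the Fourier multiplier as above. (Mere existence of \emph{some} self-adjoint extension would also follow from von Neumann's theorem, since $\mathcal{H}_{2s}$ has real coefficients and hence commutes with complex conjugation, forcing equal deficiency indices; pinning the domain down as $H^{2s}(\mathbb{R}^3)$ is what requires the argument above.) The remaining ingredients (symmetry of $\mathcal{H}_{2s}^{0}$ on $\mathcal{S}(\mathbb{R}^3)$, density of $\mathcal{S}(\mathbb{R}^3)$ in $H^{2s}(\mathbb{R}^3)$, and the bounded-perturbation lemma) are standard.
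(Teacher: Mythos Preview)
Your proposal is correct and follows essentially the same route as the paper: Fourier-diagonalise the free part $\mathcal{H}_{2s}^{0}$ and then add the bounded real potential via Kato--Rellich. The only cosmetic difference is that the paper verifies self-adjointness of $\mathcal{H}_{2s}^{0}$ on $H^{2s}(\mathbb{R}^3)$ by the range criterion (explicitly inverting $\mathcal{H}_{2s,0}\pm i$ on the Fourier side), whereas you invoke the maximal multiplication operator by the real symbol $p(k)$ and identify its domain with $H^{2s}$ through the graph-norm equivalence $1+|p(k)|^2\asymp(1+|k|^2)^{2s}$; your additional remarks on the sign of $c_s$ and the failure of semiboundedness for even $s$ are accurate and go slightly beyond what the paper states.
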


Here, $H^{2s}(\mathbb{R}^3)$ denotes the Sobolev space of complex-valued functions defined on $\mathbb{R}^3$ that belong to $L^2(\mathbb{R}^3)$ along with their (generalized) derivatives up to order $2s$.

We prove Proposition \ref{prop:self_ad} in several steps.

\begin{lemma}
If $V$ is real and $V \in L^{\infty}(\mathbb{R}^3)$, then for any $s \in \mathbb{N}$, the Hamiltonian $\mathcal{H}_{2s}$ is well-defined on $H^{2s}(\mathbb{R}^3)$ as a symmetric operator with respect to the scalar product in $L^2(\mathbb{R}^3)$.
\end{lemma}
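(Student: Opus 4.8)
The plan is to establish the lemma in two pieces: first that $\mathcal{H}_{2s}$ is well-defined as a map from $H^{2s}(\mathbb{R}^3)$ into $L^2(\mathbb{R}^3)$, and second that it is symmetric on this domain. For the first part, I would observe that each summand of the kinetic term is a constant multiple of $\Delta^j$ with $1 \le j \le s$; since $\Delta^j$ maps $H^{2s}(\mathbb{R}^3)$ continuously into $H^{2s-2j}(\mathbb{R}^3) \subseteq L^2(\mathbb{R}^3)$, the full kinetic part is a bounded operator from $H^{2s}$ into $L^2$. The potential term is handled by the hypothesis $V \in L^{\infty}(\mathbb{R}^3)$: multiplication by $V$ is bounded on $L^2(\mathbb{R}^3)$ with norm $\|V\|_{\infty}$, and $H^{2s} \hookrightarrow L^2$, so $-qV\Psi \in L^2$ whenever $\Psi \in H^{2s}$. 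Hence $\mathcal{H}_{2s}\Psi \in L^2(\mathbb{R}^3)$ for every $\Psi \in H^{2s}(\mathbb{R}^3)$, and the operator is well-defined with dense domain $H^{2s}(\mathbb{R}^3) \subseteq L^2(\mathbb{R}^3)$.

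For symmetry, I would verify $\langle \mathcal{H}_{2s}\Phi, \Psi\rangle = \langle \Phi, \mathcal{H}_{2s}\Psi\rangle$ for all $\Phi, \Psi \in H^{2s}(\mathbb{R}^3)$. The potential term contributes $\langle -qV\Phi, \Psi\rangle = \langle \Phi, -qV\Psi\rangle$ trivially, since $V$ is real-valued and $q$ is a real constant. For each kinetic summand, the coefficient $c_j \alpha^{j-1}\left(-\tfrac{\hbar^2}{2m^*}\right)^j$ is real, so it suffices to show that $\Delta^j$ is symmetric on $H^{2s}(\mathbb{R}^3)$, i.e. $\langle \Delta^j\Phi, \Psi\rangle = \langle \Phi, \Delta^j\Psi\rangle$. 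The cleanest route is via the Fourier transform: $\widehat{\Delta^j\Phi}(\xi) = (-|\xi|^2)^j\,\widehat{\Phi}(\xi)$, and by Plancherel
\[
\langle \Delta^j\Phi, \Psi\rangle = \int_{\mathbb{R}^3} (-|\xi|^2)^j\,\widehat{\Phi}(\xi)\,\overline{\widehat{\Psi}(\xi)}\,d\xi = \langle \Phi, \Delta^j\Psi\rangle,
\]
the middle integral being absolutely convergent because $(1+|\xi|^2)^{j}\widehat{\Phi}, \widehat{\Psi} \in L^2$ when $\Phi, \Psi \in H^{2s}$ with $j \le s$. Alternatively one can integrate by parts $2j$ times, using that functions in $H^{2s}(\mathbb{R}^3)$ and their derivatives up to order $2s-1$ decay suitably so that all boundary terms at infinity vanish; the Fourier approach is more economical and avoids decay bookkeeping.

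I do not expect a genuine obstacle here: the statement is essentially a bookkeeping exercise once the right functional setting is fixed. The only point requiring minor care is ensuring that the integration-by-parts (or Plancherel) identity is legitimate on all of $H^{2s}(\mathbb{R}^3)$ rather than merely on $\mathcal{S}(\mathbb{R}^3)$; this is where a density argument might be invoked, approximating $\Phi, \Psi \in H^{2s}$ by Schwartz functions and passing to the limit using the continuity of $\Delta^j : H^{2s} \to L^2$ and of multiplication by $V$ on $L^2$. Since the previous proposition already records that $\mathcal{H}_{2s}$ is symmetric on $\mathcal{S}(\mathbb{R}^3)$, the present lemma is precisely the extension of that symmetry to the larger domain $H^{2s}(\mathbb{R}^3)$, and the density of $\mathcal{S}(\mathbb{R}^3)$ in $H^{2s}(\mathbb{R}^3)$ together with the established boundedness estimates closes the argument.
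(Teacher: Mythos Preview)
Your argument is correct. The paper's proof proceeds by direct repeated integration by parts in position space: for $f,g\in H^{2s}(\mathbb{R}^3)$ it writes $(\Delta^j f,g)=-\int \nabla g\cdot\nabla(\Delta^{j-1}\overline f)\,dx=\int \Delta g\,\Delta^{j-1}\overline f\,dx=\cdots=(f,\Delta^j g)$, then treats the potential term exactly as you do. Your primary route via Plancherel and the real Fourier multiplier $(-|\xi|^2)^j$ is a genuinely different (and cleaner) argument: it sidesteps entirely the question of why the boundary contributions at infinity vanish for general $H^{2s}$ functions, which the paper's computation leaves implicit. Your additional remarks---the explicit verification that $\mathcal{H}_{2s}:H^{2s}\to L^2$ is bounded, and the density argument from $\mathcal{S}(\mathbb{R}^3)$---actually supply justification that the paper omits, so your write-up is, if anything, more complete.
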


\begin{proof}
Let $(\cdot,\cdot)_{L^2(\mathbb{R}^3)}$ denote the scalar product in $L^2(\mathbb{R}^3)$, assumed to be antilinear in the first argument. For any $f, g \in H^{2s}(\mathbb{R}^3)$ and $j \in \mathbb{N}$, $j \le 2s$, we have

\begin{eqnarray}
(\Delta^{j} f, g)_{ L^2(\R^3)} = \int_{\R^3} g(x)  \Delta^{j} \overline{f(x)} \, d x = \int_{\R^3} g(x) \nabla \cdot \nabla( \Delta^{j-1} \overline{f(x)}) \, d x = \nonumber\\
 - \int_{\R^3} \nabla  g(x) \cdot \nabla( \Delta^{j-1} \overline{f(x)}) \, d x =   \int_{\R^3} \Delta  g(x)  \Delta^{j-1} \overline{f(x)} \, d x = ... \nonumber \\
 = \int_{\R^3}  \overline{f(x)} \Delta^{j}   g(x)  \, d x = ( f, \Delta^{j} g)_{ L^2(\R^3)}.
\end{eqnarray}

Moreover, since $V(x)$ is real,
\begin{align*}
(f, Vg)_{L^2(\mathbb{R}^3)} &= \int_{\mathbb{R}^3} \overline{f(x)} V(x) g(x) \, dx = \int_{\mathbb{R}^3} \overline{V(x)f(x)} g(x) \, dx = (Vf, g)_{L^2(\mathbb{R}^3)}.
\end{align*}
Thus, the operator is symmetric.
\end{proof}

\noindent \emph{Proof of Proposition \ref{prop:self_ad}.}
Consider the free Hamiltonian $\mathcal{H}_{2s,0}$ in three dimensions
\[
\mathcal{H}_{2s,0} = \sum_{j=1}^{s} c_j \alpha^{j-1} \left( -\frac{\hbar^2}{2m^*} \right)^j \Delta^j.
\]
For $f \in H^{2s}(\mathbb{R}^3)$, we have
\[
\mathcal{H}_{2s,0} f = \frac{1}{(2\pi)^{3/2}} \int_{\mathbb{R}^3} e^{i k \cdot x} \left[ \sum_{j=1}^{s} c_j \alpha^{j-1} \left( \frac{\hbar^2}{2m^*} \right)^j |k|^{2j} \right] \hat{f}(k) \, dk.
\]
Define
\[
\mu_f^{\pm}(x) = \frac{1}{(2\pi)^{3/2}} \int_{\mathbb{R}^3} e^{i k \cdot x} \frac{\hat{f}(k)}{\pm i + \sum_{j=1}^{s} c_j \alpha^{j-1} \left( \frac{\hbar^2}{2m^*} \right)^j |k|^{2j}} \, dk,
\]
which belongs to $H^{2s}(\mathbb{R}^3)$. Then,

$$
({\cal H}_{2s,0} \pm i ) \mu_f^{\pm}  (x) = \frac{1}{(2\pi)^{3/2}} \int_{\R^3} e^{i k\cdot x} \hat{f}(k) \, d k = f(x) 
$$
implying that $\text{Range}(\mathcal{H}_{2s,0} \pm i) = L^2(\mathbb{R}^3)$, and hence $\mathcal{H}_{2s,0}$ is self-adjoint.

Since the multiplication operator $f \mapsto V f$ is bounded in $L^2(\mathbb{R}^3)$ by $\|V\|_{L^{\infty}(\mathbb{R}^3)}$ and symmetric, the full Hamiltonian $\mathcal{H}_{2s} = \mathcal{H}_{2s,0} - qV$ is also self-adjoint by the Kato-Rellich  theorem.

\hfill $\Box$

\begin{remark}
An alternative approach is to describe the system in terms of momentum or wave vector.
\end{remark}

Let $\hat{\Psi}(k,t)$ denote the wave function in wave vector space. It is related to $\Psi(x,t)$ via the unitary Fourier transform
\[
\hat{\Psi}(k,t) = \frac{1}{(2\pi)^{d/2}} \int_{\mathbb{R}^d} \Psi(x,t) e^{-i k \cdot x} \, dx, \quad d = 1,2,3,
\]
where the integral is understood in the principal value sense since we work in $L^2(\mathbb{R}^d)$. The function $\hat{\Psi}(k,t)$ satisfies the Schr\"odinger equation
\begin{align}
i\hbar \frac{\partial \hat{\Psi}(k,t)}{\partial t} = \epsilon(k) \hat{\Psi}(k,t) - \frac{q}{(2\pi)^d} \int_{\mathbb{R}^d \times \mathbb{R}^d} V(x) \hat{\Psi}(k',t) e^{i(k'-k)\cdot x} \, dk' \, dx. \label{GSE_k}
\end{align}

Introducing the Fourier transform $\hat{V}(k)$ of $V(x)$, the potential term becomes
\[
- \frac{q}{(2\pi)^{d/2}} \hat{V} * \hat{\Psi}(k,t),
\]
where $*$ denotes convolution.

Equation \eqref{GSE_k} allows for the inclusion of the full dispersion relation, but the potential term becomes integral in nature. However, for the simulation of electron devices such as RTDs, following \cite{LeKi}, we aim to solve the Schr\"odinger equation in a bounded domain representing the active region of the device, with appropriate boundary conditions. Formulating such boundary conditions in wave vector space is not straightforward; therefore, we focus exclusively on the position-space formulation given by Equation \eqref{GSE}.

\section{Generalization of the Current} \label{sec:curr}
We now want to calculate a generalized form for the current. We consider eq. (\ref{GSE}) and its conjugate one

\begin{align}
\label{SC}
&-i\hbar \frac{\partial \overline{\Psi}}{\partial t}=\sum_{j=1}^{+\infty}c_j\left(-\frac{\hbar^2}{2m^*}\right)^j\alpha^{j-1}\Delta^j \overline{\Psi}-qV \overline{\Psi}.
\end{align}

\noindent
If we multiply eq. (\ref{GSE}) for $\overline{\Psi}$ and eq. (\ref{SC}) for $\Psi$ and  subtract term by term, we get

\begin{equation}
i\hbar\frac{\partial}{\partial t}|\Psi|^2=\sum_{j=1}^{+\infty}c_j\left(-\frac{\hbar^2}{2m^*}\right)^j\alpha^{j-1}\left[\overline{\Psi}\Delta^j\Psi-\Psi\Delta^j\overline{\Psi}\right].
\label{generalizedS}
\end{equation}
Our aim is to put the previous equation in a divergence form.

\noindent

We observe that
\begin{align}
\overline{\Psi}\Delta^j\Psi 
&=\nabla \cdot\left( \overline{\Psi}\nabla(\Delta^{j-1}\Psi)\right)-\nabla \overline{\Psi}\cdot \nabla (\Delta^{j-1}\Psi)
\end{align}
\noindent
and similarly
\begin{equation}
\Psi\Delta^j\overline{\Psi}=\nabla \cdot\left( \Psi\nabla(\Delta^{j-1}\overline{\Psi})\right)-\nabla \Psi\cdot \nabla (\Delta^{j-1}\overline{\Psi}).
\end{equation}

\noindent
If $j=1$ we get the usual expression
\begin{equation}
 \overline{\Psi} \Delta  \Psi  - \Psi\Delta \overline{\Psi}   = \nabla \cdot\left(  \overline{\Psi} \nabla  \Psi  - \Psi\nabla \overline{\Psi} \right) = 2 i \nabla \cdot  \mathfrak{Im}
 ( \overline{\Psi} \nabla  \Psi ).
\end{equation}
 
 \noindent
Since for functions $f$ and $g$ regular enough it holds

$$\nabla \cdot \left((\nabla f )\Delta g\right)= \nabla f\cdot\nabla(\Delta g)+\Delta f\Delta g,$$

\noindent
if we set $f=\overline{\Psi}$ and $g=\Delta^{j-2}\Psi$, for $j > 2$, one has
\begin{equation}
\nabla \overline{\Psi}\cdot \nabla (\Delta^{j-1}\Psi)=\nabla\cdot \left(\nabla \overline{\Psi} \,\, \Delta^{j-1}\Psi \right)-\Delta^{j-1}\Psi\Delta \overline{\Psi}.
\end{equation}

\noindent
It follows that for $j \ge 2$
\begin{align}
&\overline{\Psi}\Delta^j\Psi-\Psi\Delta^j\overline{\Psi}=\nabla \cdot \Big( \overline{\Psi}\nabla(\Delta^{j-1}\Psi)-\Psi\nabla(\Delta^{j-1}\overline{\Psi})+(\nabla \Psi)\Delta^{j-1}\overline{\Psi}-(\nabla \overline{\Psi})\Delta^{j-1}\Psi\Big)+ \nonumber \\
&-\Delta^{j-1}\overline{\Psi}\Delta\Psi+\Delta^{j-1}\Psi\Delta\overline{\Psi}.
\end{align}

\noindent
If $j =2$ we have the sought divergence form
\begin{eqnarray}
\overline{\Psi}\Delta^2\Psi-\Psi\Delta^2\overline{\Psi}=
\nabla \cdot \left( \overline{\Psi}\nabla(\Delta \Psi)-\Psi\nabla(\Delta \overline{\Psi}) + 
(\nabla \Psi)\Delta \overline{\Psi}-(\nabla \overline{\Psi})\Delta \Psi  \right)  = \nonumber\\
2 i \nabla \cdot \mathfrak{Im} \left( \overline{\Psi}\nabla(\Delta \Psi) - (\nabla \overline{\Psi}) \Delta  \Psi  \right).
\end{eqnarray}
Let us suppose $j >2$. 
By iterating until in the term 
$$
 \nabla (\Delta^h \Psi ) \cdot \nabla  (\Delta^{r}\overline{\Psi})
$$
we get $h = r$, we have
\begin{eqnarray*}
&&\Delta \Psi\Delta^{j-1}\overline{\Psi}=\Delta \Psi \Delta (\Delta^{j-2}\overline{\Psi})= \Delta \Psi \nabla \cdot \nabla  (\Delta^{j-2}\overline{\Psi}) = 
\nonumber\\
&&\nabla  \cdot \left(\Delta \Psi \nabla (\Delta^{j-2}\overline{\Psi}) \right)  - \nabla (\Delta \Psi ) \cdot \nabla  (\Delta^{j-2}\overline{\Psi}) = \nonumber\\
&&\nabla  \cdot \left(\Delta \Psi \nabla (\Delta^{j-2}\overline{\Psi})  - \nabla  (\Delta \Psi) (\Delta^{j-2}\overline{\Psi}) \right) -  \Delta^2 \Psi \Delta^{j-2}\overline{\Psi} = \nonumber\\
&&\nabla  \cdot \left(\Delta \Psi \nabla (\Delta^{j-2}\overline{\Psi})  - \nabla  (\Delta \Psi) (\Delta^{j-2}\overline{\Psi})  -  (\Delta^2 \Psi) \nabla (\Delta^{j-3}\overline{\Psi})\right)  -  \nabla (\Delta^2 \Psi ) \cdot \nabla  (\Delta^{j-3}\overline{\Psi}).
\end{eqnarray*}

If we intend, for non negative integer $r$, the products of the nabla operator as follows
\begin{equation}
\nabla^r := \left\{
\begin{array}{ll}
\Delta^{r/2} & \mbox{if } r \quad \mbox{even}\\
\nabla \Delta ^{\frac{r-1}{2}} & \mbox{if } r \quad \mbox{odd}
\end{array}
\right. \label{nabla}
\end{equation}

\noindent
from the above relations one gets

\begin{align*}
\overline{\Psi}\Delta^j\Psi-\Psi\Delta^j\overline{\Psi} = 2i  \nabla\cdot  \mathfrak{Im} \left( \sum_{r=0}^{j-1}(-1)^{r} \nabla^{r}\overline{\Psi} \nabla^{2j -1 - r}\Psi \right).
\end{align*} 


Altogether, we have established the following result.

\begin{proposition}
Solutions to the generalized Schrödinger equation (GSE) satisfy the continuity equation
\[
\frac{\partial |\Psi|^2}{\partial t} + \nabla \cdot \mathbf{J} = 0,
\]
where the probability current density $\mathbf{J}$ is given by
\begin{equation}
\mathbf{J} = -\frac{2}{\hbar} \sum_{j=1}^{\infty} c_j \left(-\frac{\hbar^2}{2m^*}\right)^j \alpha^{j-1}
\, \mathfrak{Im} \left( \sum_{r=0}^{j-1} (-1)^r \nabla^r \overline{\Psi} \nabla^{2j - 1 - r} \Psi \right),
\end{equation}
and the powers of the nabla operator are interpreted as in equation~(\ref{nabla}).
\end{proposition}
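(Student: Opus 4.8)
\medskip
\noindent\textbf{Proof strategy.}
The plan is to reduce the statement to a single divergence identity, one for each power of the Laplacian, and then to read off $\mathbf{J}$. First I would multiply $(\ref{GSE})$ by $\overline{\Psi}$, multiply its conjugate $(\ref{SC})$ by $\Psi$, and subtract, obtaining exactly $(\ref{generalizedS})$,
\[
i\hbar\,\frac{\partial}{\partial t}|\Psi|^2=\sum_{j=1}^{\infty}c_j\left(-\frac{\hbar^2}{2m^*}\right)^{j}\alpha^{j-1}\left(\overline{\Psi}\Delta^j\Psi-\Psi\Delta^j\overline{\Psi}\right).
\]
It then suffices to establish, for every $j\in\N$, the pointwise identity
\begin{equation}
\overline{\Psi}\Delta^j\Psi-\Psi\Delta^j\overline{\Psi}
=2i\,\nabla\cdot\mathfrak{Im}\!\left(\sum_{r=0}^{j-1}(-1)^r\,\nabla^r\overline{\Psi}\,\nabla^{2j-1-r}\Psi\right),
\label{eq:Jkey}
\end{equation}
with the powers of $\nabla$ read as in $(\ref{nabla})$; substituting $(\ref{eq:Jkey})$ into the display above and dividing by $i\hbar$ then yields the continuity equation with the asserted current $\mathbf{J}$.

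To prove $(\ref{eq:Jkey})$ I would iterate the pointwise integration-by-parts identities already displayed above, the cases $j=1,2$ being precisely the explicit computations worked out there. Writing $\overline{\Psi}\Delta^j\Psi=\overline{\Psi}\,\Delta(\Delta^{j-1}\Psi)$ and $\Psi\Delta^j\overline{\Psi}=\Psi\,\Delta(\Delta^{j-1}\overline{\Psi})$, and repeatedly shifting one Laplacian from the higher-order factor onto the lower-order one in each of the two products, I would peel off, for $r=0,1,2,\dots$ in turn, divergence contributions of the form $(-1)^r\bigl(\nabla^r\overline{\Psi}\,\nabla^{2j-1-r}\Psi-\nabla^r\Psi\,\nabla^{2j-1-r}\overline{\Psi}\bigr)=2i(-1)^r\,\mathfrak{Im}\bigl(\nabla^r\overline{\Psi}\,\nabla^{2j-1-r}\Psi\bigr)$, the residual term drawing steadily toward factors of equal differentiation order. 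The recursion stops when those orders coincide up to one unit: for $j$ even the residual term is $\Delta^{j/2}\overline{\Psi}\,\Delta^{j/2}\Psi-\Delta^{j/2}\Psi\,\Delta^{j/2}\overline{\Psi}$, which vanishes identically; for $j$ odd one further shift leaves $\nabla\Delta^{(j-1)/2}\overline{\Psi}\cdot\nabla\Delta^{(j-1)/2}\Psi-\nabla\Delta^{(j-1)/2}\Psi\cdot\nabla\Delta^{(j-1)/2}\overline{\Psi}$, which vanishes by the symmetry of the Euclidean scalar product. Collecting the divergence contributions accumulated for $r=0,\dots,j-1$ then reproduces exactly the right-hand side of $(\ref{eq:Jkey})$.

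I expect the hard part to be the bookkeeping rather than any single estimate: one has to track the alternating sign $(-1)^r$ and the index $r$ consistently while $j$ runs through even and odd values, and to verify that the terminal term behaves as claimed in each parity class. This is exactly what the convention $(\ref{nabla})$ is designed to absorb, so the parity split should be routine but somewhat tedious. On the analytic side, every integration by parts is legitimate pointwise as soon as $\Psi(\cdot,t)$ possesses locally integrable weak derivatives up to order $2j$, which is guaranteed by the functional framework of Section~\ref{sec:SE} (Schwartz data, or $H^{2s}(\R^3)$). Finally, the interchange of $\partial_t$ with the series in $(\ref{generalizedS})$ and the convergence of the series defining $\mathbf{J}$ are immediate for the truncated Hamiltonian $\mathcal{H}_{2s}$ used in the applications below, where all sums are finite; in the untruncated case $(\ref{eq:Jkey})$ is to be understood termwise and the series summed formally.
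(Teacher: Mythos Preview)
Your proposal is correct and follows essentially the same route as the paper: derive $(\ref{generalizedS})$ by the usual multiply--conjugate--subtract trick, then for each $j$ iterate the product-rule identities $\overline{\Psi}\Delta^j\Psi=\nabla\cdot(\overline{\Psi}\nabla\Delta^{j-1}\Psi)-\nabla\overline{\Psi}\cdot\nabla\Delta^{j-1}\Psi$ and $\nabla\overline{\Psi}\cdot\nabla\Delta^{j-1}\Psi=\nabla\cdot(\nabla\overline{\Psi}\,\Delta^{j-1}\Psi)-\Delta\overline{\Psi}\,\Delta^{j-1}\Psi$, peeling off the divergence terms indexed by $r=0,\dots,j-1$ with alternating signs until the residual cancels. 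Your explicit parity split for the termination step (even $j$: the scalar residual $\Delta^{j/2}\overline{\Psi}\,\Delta^{j/2}\Psi-\Delta^{j/2}\Psi\,\Delta^{j/2}\overline{\Psi}$ vanishes; odd $j$: the symmetric gradient residual vanishes) is a slightly cleaner accounting than the paper's ``iterate until $h=r$'' phrasing, but the argument is the same.
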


In particular, for the fourth-order case, the current density becomes
\begin{equation}
\mathbf{J}_4 = \mathfrak{Im} \left( \frac{\hbar}{m^*} \overline{\Psi} \nabla \Psi + \frac{\alpha \hbar^3}{2 (m^*)^2} \left( \overline{\Psi} \nabla \Delta \Psi - \nabla \overline{\Psi} \Delta \Psi \right) \right).
\label{J4}
\end{equation}

For the sixth-order case, we obtain
\begin{equation}
\mathbf{J}_6 = \mathbf{J}_4 + \frac{\alpha^2 \hbar^5}{2 (m^*)^3} \mathfrak{Im} \left( \overline{\Psi} \nabla \Delta^2 \Psi - \nabla \overline{\Psi} \Delta^2 \Psi + \Delta \overline{\Psi} \nabla \Delta \Psi \right).
\label{J6}
\end{equation}

\begin{remark}
Consider the approximation of order $2s$, with $s \in \mathbb{N}$, and assume—according to spectral theory—that the solution belongs to the Sobolev space $H^{2s}(\mathbb{R}^d)$, where $d = 1, 2, 3$. In dimensions $d = 1$ and $d = 2$, the wave function $\Psi$ possesses continuous derivatives up to order $2s - 1$, ensuring that the current density is a smooth function. In the case $d = 3$, however, only continuity up to order $2s - 2$ can be guaranteed, and smoothness of the current density cannot be assured unless higher regularity beyond $H^{2s}(\mathbb{R}^d)$ is imposed.
\end{remark}

We note that a similar approach has been adopted for the study of spin current in \cite{BoDroFiWe}.

\section{Transparent boundary conditions for the generalized stationary Schr\"odinger equation} \label{sec:TBC}

We consider the stationary form of the Generalized Schrödinger Equation (GSE), given by

 $H\Psi=E\Psi$,

where the Hamiltonian operator is defined as
\begin{equation}
H =  \sum_{j=1}^{+\infty}c_j\alpha^{j-1}\left(\left(-\frac{\hbar^2}{2m^*}\right)^j\Delta^j \right)- qV(x). 
\end{equation}
This leads to the equation
\begin{equation}
 \sum_{j=1}^{+\infty}c_j\alpha^{j-1}\left(\left(-\frac{\hbar^2}{2m^*}\right)^j\Delta^j\Psi\right)-(qV(x)+E)\Psi(x)=0,
\end{equation}

Our objective is to derive transparent boundary conditions that are valid for any fixed order of the expansion. The methodology adopted here generalizes the approach introduced in \cite{AliNaRo}.

From this point onward, we restrict our analysis to the one-dimensional case. The real axis is partitioned into three regions: Region I ($x < 0$) and Region III ($x > L$, with $L > 0$) represent semi-infinite waveguides modeling the contacts, while Region II ($0 \le x \le L$) constitutes the active zone of the device. Regions I and III act as reservoirs from which electrons are injected into Region II. Electrons are injected from Region I with positive momentum and from Region III with negative momentum.

Let $2s$, with $s \in \mathbb{N}$, denote the order of the expansion in the Hamiltonian. We distinguish two cases based on the sign of the incident electron’s momentum. \\[0.5cm]

\subsection {Case $k_1>0$} 

Electron waves are injected at $x = 0$, and may be either reflected at $x = 0$ or transmitted at $x = L$. Generalizing the framework of \cite{AliNaRo}, we propose the following ansatz: the solution is given by the superposition of one incident wave and $s$ reflected ones in the region $x<0$ and  by the superimposition of $s$ transmitted waves in the region $x >L$, that is
\begin{equation}
\left\{
\begin{array}{ll}
\displaystyle \Psi_{I}(x)=e^{ik_1x}+\sum_{j=1}^{s} r_je^{-ik_jx}, &  x<0\\
\Psi_{III}(x)=\sum_{j=1}^{s} t_je^{i\tilde{k}_jx}, &  x>L
\end{array}
\right.  \label{ansatz1}
\end{equation}
where $r_j$ and $t_j$ represent the reflection and transmission coefficients to be determined along with the wave vectors $k_j$. 
\noindent

Assuming that $\Psi \in H^{2s}(\mathbb{R})$, we impose continuity of $\Psi$ and its derivatives up to order $2s - 1$ at $x = 0$, getting
\begin{align}
(ik_1)^l+\sum_{j=1}^{s}r_j(-ik_j)^l=\Psi_{II}^{(l)}(0)
\end{align}
\noindent
with $l=0,1,\ldots,2s-1$. 

If the wave vectors $k_i$ are distinct, i.e., $k_i \ne k_j$ for $i \ne j$, the first $s$ equations yield
\begin{equation}
r_j=\frac{\sum_{n=0}^{s-1}(-1)^{j+n+1}\left(\Psi_{II}^{(n)}(0)-(ik_1)^n \right)D_{n+1,j}^{x_j}}{\prod_{1\leq e < f \leq s}i(k_e-k_f)},\ j=1,\ldots,s
\end{equation}

\noindent
where $D_{n+1,j}^{x_j}$ is the cofactor of the $(n+1,j)$-entry of the matrix $D_{x_j}$, defined as
\[
D_{x_j} =
\begin{array}{c}
\begin{array}{cccccc}
      &  &  & & & \overset{\text{j}}{\downarrow} 
\end{array} \\[1ex]
\left[
\begin{array}{cccccc}
1 & 1 & \ldots& d_0 & \ldots &1 \\
-ik_1 & -ik_2 & \ldots & d_1 & \ldots & -ik_s\\
-k_1^2 & -k_2^2 & \ldots& d_2 & \ldots & -k_s^2 \\
\vdots & \vdots & \vdots & \vdots & \vdots & \vdots\\
(-ik_1)^{s-1} & (-ik_2)^{s-1} & \ldots & d_{s-1}& \ldots & (-ik_s)^{s-1}
\end{array}
\right].
\end{array}
\]

\noindent
Similarly, imposing continuity at $x = L$ gives

Substituting the expressions for $r_j$ into the remaining $s$ equations yields
\begin{equation}
(ik_1)^l+\sum_{j=1}^{s}\left[ \frac{\sum_{n=0}^{s-1}(-1)^{j+n+1}\left(\Psi_{II}^{(n)}(0)-(ik_1)^n \right)D_{n+1,j}^{x_j}}{\prod_{1\leq e < f \leq s}i(k_e-k_f)} \right](-ik_j)^l=\Psi_{II}^{(l)}(0)
\end{equation}

\noindent
for $l=s,\ldots,2s-1$.

\noindent
If we now  impose the continuity of $\Psi$ and all the derivatives up to the order $2s-1$ in $x=L$, we get $2s$ equations in $2s$ unknowns

\begin{align}
\sum_{j=1}^{s}t_j(i\tilde{k}_j)^{l-1}e^{i\tilde{k}_jL}=\Psi_{II}^{(l-1)}(L),\ l=1,\ldots,2s.
\end{align}

From the first $s$ equations, we get

\begin{equation}
t_j=\frac{\sum_{n=0}^{s-1}(-1)^{j+n+1}\Psi_{II}^{(n)}(L)E_{n+1,j}^{x_j}}{e^{iL\sum_{j=1}^s \tilde{k}_j}\prod_{1\leq e < f \leq s}i(\tilde{k}_f-\tilde{k}_e)},\ j=1,\ldots,s
\end{equation}

\noindent
where $E_{n+1,j}^{x_j}$ is the cofactor of the $(n+1,j)$-entry  of the matrix $E_{x_j}$
\[
E_{x_j} =
\begin{array}{c}
\begin{array}{cccccc}
      &  &  & & & \overset{\text{j}}{\downarrow} 
\end{array} \\[1ex]
\left[
\begin{array}{cccccc}
e^{i\tilde{k}_1 L} & e^{i\tilde{k}_2 L} & \ldots& \Psi_{II}(L) & \ldots & e^{i\tilde{k}_s L} \\
i\tilde{k}_1 e^{i\tilde{k}_1 L}  & i\tilde{k}_2 e^{i\tilde{k}_2 L}  & \ldots & \Psi_{II}^{'}(L)  & \ldots & i\tilde{k}_s e^{i\tilde{k}_s L} \\
-\tilde{k}_1^2 e^{i\tilde{k}_1 L}  & -\tilde{k}_2^2e^{i\tilde{k}_2 L}& \ldots&\Psi_{II}^{''}(L) & \ldots & -\tilde{k}_s^2e^{i\tilde{k}_s L}\\
\vdots & \vdots & \vdots & \vdots & \vdots & \vdots\\
(i\tilde{k}_1)^{s-1}e^{i\tilde{k}_1 L} & (i\tilde{k}_2)^{s-1}e^{i\tilde{k}_2 L} & \ldots & \Psi_{II}^{s-1}(L)& \ldots & (i\tilde{k}_s)^{s-1}e^{i\tilde{k}_s L}
\end{array}
\right]
\end{array}
\]

\noindent
If we substitute these $t_j$ in the remaining equations, one has the following boundary conditions

\begin{equation}
\sum_{j=1}^{s} \left[ \frac{\sum_{n=0}^{s-1}(-1)^{j+n+1}\Psi_{II}^{(n)}(L)E_{n+1,j}^{x_j}}{e^{iL\sum_{j=1}^s\tilde{k}_j}\prod_{1\leq e < f \leq s}i(\tilde{k}_f-\tilde{k}_e)} \right] (i\tilde{k}_j)^l e^{i\tilde{k}_j L}=\Psi^{(l)}(L), \ l=s,\ldots,2s-1
\end{equation}
If we have the coincidence of some wave vectors, the analysis is much more involved and the situation must be handled on a case-by-case basis.

Now we pass to evaluate the wave vectors  $k_i$. In the region $x<0$, after the substitution  the ansatz in the Schr\"odinger equation and exploiting the independence of the functions $e^{-ik_jx}, \ j=1,\ldots,s$ we get for the wave vectors ${k}_j$ the relation

\begin{equation}
\sum_{p=1}^s c_p\alpha^{p-1}\left(-\frac{\hbar^2}{2m^*} \right)^p(-ik_j)^{2p}-(qV(0)+E)=0, \ j=1,\ldots,s.
\end{equation}

\noindent 
By using the fact that $k_1$ is known, from the previous relation for $j = 1$ we evaluate the energy $E$
\begin{equation}
E=\sum_{p=1}^s c_p\alpha^{p-1}\left(-\frac{\hbar^2}{2m^*} \right)^p(ik_1)^{2p}-qV(0)
\end{equation}
\noindent
while the remaining relations allow us to get the wave vectors $k_j$, $ j = 2, \ldots,s$.
 
On the other hand, considering the region $x>L$, we get for the wave vectors $\tilde{k}_j$ the relation
\begin{equation}
\sum_{p=1}^s c_p\alpha^{p-1}\left(-\frac{\hbar^2}{2m^*} \right)^p(i\tilde{k}_j)^{2p}-(qV(L)+E)=0, \ j=1,\ldots,s.
\end{equation}
\begin{remark}
Only wave vectors with positive real parts are included in the ansatz. If any wave vector is complex,  to ensure bounded solutions, we take only the wave vectors having positive coefficient of the imaginary part and  evanescent modes arise.
\end{remark}

\subsection{Case $k_1<0$}

Electron waves are injected at the boundary point $x=L$, where they may either be reflected back into Region III or transmitted into Region I. The wave function is modelled by the following ansatz
\begin{equation}
\left\{
\begin{array}{ll}
\displaystyle \Psi_{I}(x)=\sum_{j=1}^{s}t_je^{-i\tilde{k}_j (x-L)}, &  x<0\\
\Psi_{III}(x)=e^{ik_1(x-L)}+\sum_{j=1}^{s}r_je^{-ik_j(x-L)}, &  x>L
\end{array}
\right. \label{ansatz2}
\end{equation}

\noindent
To ensure physical consistency, we impose the continuity of the wave function $\Psi$ and its derivatives up to order $2s-1$  at the interface $x=0$. This yields the following system of equations:
\begin{align}
\sum_{j=1}^{s} t_j(-i\tilde{k}_1)^l e^{i\tilde{k}_j L}=\Psi_{II}^{(l)}(0), \quad l = 0, 1, \ldots, 2s - 1.
\end{align}

\noindent
Assuming distinct wave vectors $\tilde{k}_i \ne \tilde{k}_j$ for $i \ne j$, the first $s$ equations allow us to solve for the transmission coefficients $ t_j $:

\begin{equation}
t_j=\frac{\sum_{n=0}^{s-1}(-1)^{j+n+1}\Psi_{II}^{(n)}(0)E_{n+1,j}^{x_j}}{e^{iL\sum_{j=1}^{s}\tilde{k}_j}\prod_{1\leq e < f \leq s}i(\tilde{k}_e-\tilde{k}_f)},\ j=1,\ldots,s
\end{equation}

\noindent
where $E_{n+1,j}^{x_j}$ denotes the cofactor of the $(n+1, j)$-entry of the matrix $E_{x_j}$
\[
E_{x_j} =
\begin{array}{c}
\begin{array}{cccccc}
      &  &  & & & \overset{\text{j}}{\downarrow} 
\end{array} \\[1ex]
\left[
\begin{array}{cccccc}
e^{i\tilde{k}_1 L} & e^{i\tilde{k}_2 L} & \ldots& \Psi_{II}(L) & \ldots & e^{i\tilde{k}_s L} \\
-i\tilde{k}_1 e^{i\tilde{k}_1 L}  &- i\tilde{k}_2 e^{i\tilde{k}_2 L}  & \ldots & \Psi_{II}^{'}(L)  & \ldots & -i\tilde{k_s}e^{i\tilde{k}_s L} \\
-\tilde{k}_1^2e^{i\tilde{k}_1 L}  & -\tilde{k}_2^2e^{i\tilde{k}_2 L}& \ldots&\Psi_{II}^{''}(L) & \ldots & -\tilde{k}_s^2e^{i\tilde{k}_s L}\\
\vdots & \vdots & \vdots & \vdots & \vdots & \vdots\\
(-i\tilde{k}_1)^{s-1}e^{i\tilde{k}_1 L} & (-i\tilde{k}_2)^{s-1}e^{i\tilde{k}_2 L} & \ldots & \Psi_{II}^{s-1}(L)& \ldots & (-i\tilde{k}_s)^{s-1}e^{i\tilde{k}_s L}
\end{array}
\right]
\end{array}
\]

\noindent
Substituting the expressions for $ t_j $ into the remaining equations yields the transparent boundary conditions at $x = 0$
\begin{equation}
\sum_{j=1}^s \left[ \frac{\sum_{n=0}^{s-1}(-1)^{j+n+1}\Psi_{II}^{(n)}(0)E_{n+1,j}^{x_j}}{e^{iL\sum_{j=1}^{s}\tilde{k}_j}\prod_{1\leq e < f \leq s}i(\tilde{k}_e-\tilde{k}_f)}\right](-i\tilde{k}_j)^l e^{i\tilde{k}_j L}=\Psi^{(l)}(0), \ l=s,\ldots,2s-1 \label{TBC0}
\end{equation}

\noindent
\noindent
Next, we impose continuity of $\Psi$ and its derivatives up to order $2s - 1$ at $x = L$, resulting in
\begin{equation}
(ik_1)^{l} +\sum_{j=1}^s r_j(-ik_j)^{l} =\Psi^{(l)}(L), \quad l = 0, 1, \ldots, 2s-1.
\end{equation}

\noindent
Following the same procedure, we obtain the reflection coefficients $r_j$

\begin{equation}
r_j=\frac{\sum_{n=0}^{s-1}(-1)^{j+n+1}\left(\Psi_{II}^{(n)}(L)-(ik_1)^n \right)D_{n+1,j}^{x_j}}{\prod_{1\leq e < f \leq s}i(k_e-k_f)},\ j=1,\ldots,s
\end{equation}

\noindent
and the corresponding boundary conditions
\begin{equation}
(ik_1)^l +\sum_{j=1}^{s}\left[ \frac{\sum_{n=0}^{s-1}(-1)^{j+n+1}\left(\Psi_{II}^{(n)}(L)-(ik_1)^n \right)D_{n+1,j}^{x_j}}{\prod_{1\leq e < f \leq s}i(k_e-k_f)} \right](-ik_j)^l=\Psi_{II}^{(l)}(L). \label{TBCL}
\end{equation}
\noindent
for $l=s,\ldots,2s-1$.

To determine the wave vectors, we analyze the dispersion relations in the respective regions. For $ x < 0$, the wave vectors $\tilde{k}_j$ satisfy
\begin{equation}
\sum_{p=1}^s c_p\alpha^{p-1}\left(-\frac{\hbar^2}{2m^*} \right)^p(-i\tilde{k}_j)^{2p}-(qV(0)+E)=0, \ j=1,\ldots,s.
\end{equation}
Evaluating for $j=1$ we get the energy
\begin{equation}
E=\sum_{p=1}^s c_p\alpha^{p-1}\left(-\frac{\hbar^2}{2m^*} \right)^p(ik_1)^{2p}-qV(L).
\end{equation}
while the remaining relations gives the values of the remaining $k_j$'s, $j=2, \ldots,s$.
\noindent
For $ x > L$, the wave vectors $k_j$ satisfy
\begin{equation}
\sum_{p=1}^s c_p\alpha^{p-1}\left(-\frac{\hbar^2}{2m^*} \right)^p(-i k_j)^{2p}-(qV(L)+E)=0, \ j=1,\ldots,s.
\end{equation}

If we have the coincidence of some wave vectors, as for the case $k_1>0$ the analysis is much more involved and the situation must be handled on a case-by-case basis. 
\begin{remark}
Only wave vectors with positive real parts are included in the ansatz. If any wave vector is complex, to ensure bounded solutions, we take only the wave vectors having negative coefficient of the imaginary part and  evanescent modes arise.
\end{remark}

\section{Well posedness of the higher order SE with transparent boundary conditions}\label{sec:well}

We establish a well-posedness result for the generalized Schr\"odinger  model of arbitrary even order, thereby extending the result previously proven for the fourth-order case. For the sake of clarity, we present the case where $ k_1 > 0 $; the case $ k_1 < 0$ can be treated analogously.

Consider the generalized Schr\"odinger equation of order $2s$, $s\in\N$
\begin{equation}
\sum_{j=1}^{s}c_j \alpha^{j-1}\left(\left(-\frac{\hbar^2}{2m^*} \right)^j \Psi^{(2j)}(x)\right)-(qV(x)+E)\Psi=0.
\label{SEGeneral}
\end{equation}

Let $W^{2s,1}(0,L)$ denote the Sobolev space defined by
$$W^{2s,1}(0,L)=\{u\in L^1(0,L):D_\alpha u\in L^1(0,L),\ \forall \alpha : |\alpha| \le 2s \}.$$

\noindent
\begin{proposition}
Assume that the potential $V(x)$ is real-valued and belongs to $L^{\infty}(0,L)$. Then, the boundary value problem consisting of the generalized Schrödinger equation (\ref{SEGeneral}) of order \( 2s \), together with the transparent boundary conditions (\ref{TBC0})-(\ref{TBCL}), admits a unique solution  $\Psi \in W^{2s,1}(0,L)$, provided that $k_i \neq k_j$ for all $i \ne j$, $i,j = 0,1,2,3$, and the matrix \( A = (a_{hr}) \), whose entries are defined as:
\begin{eqnarray*}
a_{hr} = \left\{\begin{array}{ll}\sum_{j=1}^{s}(-1)^{j+r}D_{r}^{x_j}(-ik_j)^{h+s-1} & h,r = 1,2, \ldots, s\\[0.4cm]
 \prod_{1\leq e < f\leq s}i({k}_f-{k}_e) & h = 1,2, \ldots, s, \quad r = h+s,\\[0.4cm]
\sum_{j=1}^{s}\left(\sum_{n=0}^{s-1}(-1)^{j+n+1}\varphi_{r-1}^{(n)}(L)E_{n+1,j}^{x_j}\right)(i\tilde{k}_j)^{h-1}e^{i\tilde{k}_j(h-1)} & \\
- \varphi_{r-1}^{(l)}(L)\left[e^{iL\sum_{j=1}^{s}\tilde{k}_j}\prod_{1\leq e < f\leq s}i(\tilde{k}_f-\tilde{k}_e) \right] 
& h = s+1, \ldots, 2s, r = 1, 2, \ldots, 2s.\\[0.4cm]
0 & \mbox{otherwise}
\end{array} \right.
\end{eqnarray*}
is non singular.
\end{proposition}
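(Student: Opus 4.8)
\emph{Proof strategy.}
The plan is to reduce the boundary value problem to a finite‑dimensional linear system for the initial data $\Psi^{(0)}(0),\dots,\Psi^{(2s-1)}(0)$ and to show that this system is governed exactly by the matrix $A$, so that non‑singularity of $A$ becomes equivalent to unique solvability. First I would treat \eqref{SEGeneral} as a linear ODE of order $2s$: its leading coefficient $c_s\alpha^{s-1}\left(-\hbar^2/2m^*\right)^{s}$ is a nonzero constant (note $c_s=\binom{1/2}{s}2^{2s-1}\neq 0$ since $1/2$ is not a nonnegative integer), and the coefficient of $\Psi$ is $-(qV(x)+E)\in L^{\infty}(0,L)\subset L^{1}(0,L)$. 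Writing it as a first‑order system $\mathbf{y}'=M(x)\mathbf{y}$ with $\mathbf{y}=(\Psi,\Psi',\dots,\Psi^{(2s-1)})$ and $M\in L^{\infty}(0,L)$, the Carathéodory existence–uniqueness theorem yields, for every prescribed vector of initial values at $x=0$, a unique absolutely continuous solution; since then $\Psi^{(2s)}$ equals a linear combination of $\Psi,\Psi'',\dots,\Psi^{(2s-2)}$ with $L^{\infty}$ coefficients divided by the nonzero leading coefficient, one gets $\Psi\in W^{2s,1}(0,L)$. Let $\varphi_0,\dots,\varphi_{2s-1}\in W^{2s,1}(0,L)$ be the canonical fundamental system determined by $\varphi_m^{(n)}(0)=\delta_{mn}$; every solution of \eqref{SEGeneral} is $\Psi=\sum_{m=0}^{2s-1}\lambda_m\varphi_m$ with $\lambda_m=\Psi^{(m)}(0)$, and $\Psi^{(l)}(L)=\sum_m\lambda_m\varphi_m^{(l)}(L)$.

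Next I would substitute this representation into the transparent boundary conditions \eqref{TBC0}–\eqref{TBCL}. The derivation of Section~\ref{sec:TBC} already shows that, once the wave vectors $k_j$ and $\tilde k_j$ are pairwise distinct, the Vandermonde‑type factors $\prod_{1\le e<f\le s}i(k_e-k_f)$ and $e^{iL\sum_j\tilde k_j}\prod_{1\le e<f\le s}i(\tilde k_f-\tilde k_e)$ are nonzero, so the first $s$ continuity equations at each interface can be solved for the $r_j$ and $t_j$; the remaining $s$ relations at $x=0$ (for $l=s,\dots,2s-1$) and the $s$ relations at $x=L$ (for $l=s,\dots,2s-1$) then become, after substitution of $\Psi^{(n)}(0)=\lambda_n$ and $\Psi^{(n)}(L)=\sum_m\lambda_m\varphi_m^{(n)}(L)$ and after clearing the common (nonzero) denominators, $2s$ linear equations $A\boldsymbol\lambda=\mathbf b$ in $\boldsymbol\lambda=(\lambda_0,\dots,\lambda_{2s-1})$. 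Here the rows $h=1,\dots,s$ reproduce the $x=0$ conditions — with the cofactor block $D^{x_j}_{r}$ acting on $\lambda_0,\dots,\lambda_{s-1}$ and the single entry $\prod_{1\le e<f\le s}i(k_f-k_e)$ in column $h+s$ multiplying $\lambda_{h+s-1}$ — while the rows $h=s+1,\dots,2s$ reproduce the $x=L$ conditions, with the cofactors $E^{x_j}_{n+1,j}$ and the factors $\varphi_{r-1}^{(n)}(L)$; and $\mathbf b$ collects the inhomogeneous incident‑wave terms $(ik_1)^{l}$. One verifies entry‑by‑entry that this coefficient matrix coincides with the matrix $A$ of the statement.

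Finally, since the ODE is homogeneous and the inhomogeneity enters only through $\mathbf b$, the BVP is equivalent to the linear system $A\boldsymbol\lambda=\mathbf b$. If $A$ is non‑singular, $\boldsymbol\lambda=A^{-1}\mathbf b$ is uniquely determined, and $\Psi=\sum_m\lambda_m\varphi_m\in W^{2s,1}(0,L)$ is the unique solution; conversely any $W^{2s,1}$ solution produces, via its values and derivatives at $0$, a vector solving $A\boldsymbol\lambda=\mathbf b$, which gives uniqueness. I expect the main obstacle to be not the ODE theory but precisely the bookkeeping identification of the coefficient matrix with $A$: one must carefully track the cofactor expansions $D^{x_j}_{n+1,j}$, $E^{x_j}_{n+1,j}$, the signs $(-1)^{j+n+1}$ and the exponential factors $e^{i\tilde k_j L}$ through the elimination of $r_j$ and $t_j$, and check that the common factors can be cleared without affecting solvability — which is exactly where the hypothesis $k_i\neq k_j$ is used. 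A secondary point, worth only a remark, is the degenerate case of coinciding wave vectors already flagged in Section~\ref{sec:TBC}: there the ansatz must be enriched with polynomial‑times‑exponential modes, the fundamental exponential system and the corresponding matrix change accordingly, and that situation is deliberately excluded by the stated hypotheses.
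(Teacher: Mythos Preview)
Your proposal is correct and follows essentially the same route as the paper: build the canonical fundamental system $\varphi_0,\dots,\varphi_{2s-1}$ with $\varphi_m^{(n)}(0)=\delta_{mn}$, expand $\Psi=\sum_m\lambda_m\varphi_m$, substitute into the transparent boundary conditions, and reduce to the linear system $A\boldsymbol\lambda=\mathbf b$. You supply somewhat more detail than the paper on the ODE side (Carath\'eodory theory, the first-order reformulation, and the $W^{2s,1}$ regularity), but the underlying argument is identical.
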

\begin{proof} Under the hypothesis on the potential, the coefficients of the linear equations (\ref{SEGeneral}) are in $L^1_{loc} (0,L)$ ensuring the applicability of the existence and uniqueness theorem for the associated Cauchy problem. Let $(\varphi_0,\varphi_1,\varphi_2,\ldots,\varphi_{2s-1})$ be a fundamental set of solutions to \ref{SEGeneral}, satisfying the initial conditions:

\begin{align}
&\varphi_0(0)=1,\quad \varphi_0^{(1)}(0)=0,\ldots, \quad \varphi_0^{(2s-1)}(0)=0, \nonumber \\
&\varphi_1(0)=0,\quad \varphi_1^{(1)}(0)=1,\ldots, \quad \varphi_1^{(2s-1)}(0)=0, \nonumber \\
& \vdots \nonumber \\
&\varphi_{2s-1}(0)=0,\quad \varphi_{2s-1}^{(1)}(0)=0,\ldots, \quad \varphi_{2s-1}^{(2s-1)}(0)=1. \nonumber \\\nonumber
\end{align}

\noindent
The general integral of (\ref{SEGeneral}) can be expressed as  
$\Psi(x)=\sum_{p=0}^{2s-1}c_p\varphi_p(x)$ with $c_p \in \C$. 

Substituting this expression into the transparent boundary conditions at $x=0$ yields

\begin{align}
&\sum_{n=0}^{s-1}\left[\sum_{j=1}^{s}(-1)^{j+n+1}D_{n+1}^{x_j}(-ik_j)^l\right]c_n-\left[\prod_{1\leq e < f\leq s}i(k_e-k_f)\right]c_l=\nonumber \\
&\sum_{j=1}^{s}\sum_{n=0}^{s-1}(-1)^{j+n+1}(ik_1)^nD_{n+1}^{x_j}(-ik_j)^l-(ik_j)^l\left[\prod_{1\leq e < f\leq s}i(k_e-k_f)\right],
\end{align}
\noindent
for $l=s,\ldots,2s-1$. 

At $x = L$, the boundary condition becomes:
\begin{align}
&\sum_{p=0}^{2s-1}c_p \left\{\sum_{j=1}^{s}\left(\sum_{n=0}^{s-1}(-1)^{j+n+1}\varphi_p^{(n)}(L)E_{n+1,j}^{x_j}\right)(i\tilde{k}_j)^le^{i\tilde{k}_jl} \right.\nonumber \\
& - \left.\varphi_p^{(l)}(L)\left[e^{iL\sum_{j=1}^{s}\tilde{k}_j}\prod_{1\leq e < f\leq s}i(\tilde{k}_f-\tilde{k}_e) \right]\right\} = 0,
\end{align}
\nonumber 
$l=s,\ldots,2s-1$.

These conditions yield a linear system for the coefficients $c_p$
\begin{eqnarray}
A {\bf c}_p = {\bf b}
\end{eqnarray}
with obvious meaning of ${\bf b}$.
Therefore if $A$ is invertible we have the existence and uniqueness of the solution to the boundary problem defined by (\ref{SEGeneral}) and the transparent boundary conditions.
\end{proof}


\section{Comparison of the current between the second and the fourth order SE} \label{sec:comparison}

In this section, we investigate the influence of the additional terms introduced in higher-order Schrödinger equations (SEs) on the probability current. To minimize algebraic complexity, we focus on the comparison between the second-order (SE2) and fourth-order (SE4) formulations. For clarity, the analysis is restricted to the one-dimensional case.

We assume that all wave components are propagating, i.e., all wave vectors are real. For definiteness, we consider electrons incident from the left with a positive wave vector.

The probability current associated with SE2 is given by
 
$$J_{2}=\frac{\hbar}{m^*}  \mathfrak{Im} (\bar{\Psi}\Psi').$$

The incident wave is $\Psi_{inc}=e^{ikx_1}$ while the reflected and transmitted ones are 
$\Psi_{refl}=re^{-ikx_1}$ and $\Psi_{transm}=te^{i\tilde{k}_1 x}$.
The corresponding incident, reflected and transmitted  currents read
\begin{align}
&J_{inc}=\frac{\hbar}{m^*}  \mathfrak{Im} (e^{-ik_1 x}i k_1 e^{i k_1 x})=\frac{\hbar}{m^*}k_1, \\
&J_{refl}=\frac{\hbar}{m^*}  \mathfrak{Im}(\bar{r} e^{ik_1x}(-i k_1 r) e^{-i k_1 x})=-\frac{\hbar}{m^*} k_1|r|^2, \\
&J_{transm}=\frac{\hbar}{m^*}  \mathfrak{Im}(\bar{t} e^{-i \tilde{k}_1 x}(i t \tilde{k}_1)e^{i \tilde{k}_1 x})=\frac{\hbar}{m^*} \tilde{k}_1 |t|^2.
\end{align}
which give the following transmission and reflection probabilities
\begin{align}
 |T|^2:=\frac{|J_{transm}|}{|J_{inc}|}=\frac{|\tilde{k}_1|}{|k|}|t|^2,\quad
 |R|^2:=\frac{|J_{refl}|}{|J_{inc}|}=|r|^2
\end{align}
satisfying
$$
 |R|^2 + |T|^2 = 1
$$
because of the conservation of the current.

Defining the total wavefunction on the left as \(\Psi_{\text{left}} = e^{ik_1x} + r e^{-ik_1x}\), the total current in this region becomes
\begin{align}
J_{left}=&\frac{\hbar}{m^*}  \mathfrak{Im} ((e^{-i k_1 x}+\bar{r}e^{i k_1 x})(ike^{i k_1 x}-i r k_1 e^{-i k_1 x}))= \nonumber \\
=&\frac{\hbar}{m^*}  \mathfrak{Im} \left(ik_1+2Re(ik_1\bar{r} e^{2ik_1x})-ik_1 |r|^2  \right)=\frac{\hbar}{m^*}k_1-\frac{\hbar}{m^*}k_1|r|^2=\nonumber \\
=&J_{inc}+J_{refl}=\frac{\hbar}{m^*}k_1(1-|r|^2)=\frac{\hbar}{m^*}\tilde{k}_1 |t|^2=J_{trasm}.
\end{align}

%
Now let us consider SE4. In this case the current reads

$$J= J_4 = \frac{\hbar}{m}  \mathfrak{Im} (\Psi^{'}\bar{\Psi})+\frac{\alpha \hbar^3}{2m^{*2}}  \mathfrak{Im} (\Psi^{'''}\bar{\Psi}-\Psi^{''}\bar{\Psi}^{'})$$

The wavefunctions are $\Psi_{inc}=e^{ik_1x},\ \Psi_{refl}=r_1e^{-ik_1x}+r_2e^{-ik_2x},\  \Psi_{transm}=t_1e^{i \tilde{k}_1 x}+t_2e^{i \tilde{k}_2 x}$.
The corresponding currents are

\begin{align}
J_{inc}&= \frac{\hbar}{m^*}k_1-\frac{\alpha \hbar^3}{m^{*2}}k_1^3,\\
J_{refl}
&=- \frac{\hbar}{m^*}\Big[k_1|r_1|^2 + \mathfrak{Re}\Big(\bar{r_1}r_2k_2e^{i(k_1-k_2)x} + \bar{r_2}r_1k_1e^{i(k_2-k_1)x}\Big) + |r_2|^2k_2\Big]\nonumber \\
&+\frac{\alpha \hbar^3}{2m^{*2}}\Big[ 2 k_1^3|r_1|^2+ \mathfrak{Re} \Big( (k_1+k_2) \Big( k_2^2 \bar{r_1}r_2e^{i(k_1-k_2)x}+ k_1^2  \bar{r_2}r_1e^{i(k_2-k_1)x}\Big)\Big)  +  2 |r_2|^2k_2^3 \Big],\\ 
J_{transm}
&=\frac{\hbar}{m^*}\Big[\tilde{k}_1 |t_1|^2 + \mathfrak{Re}\Big(\bar{t_1}t_2  \tilde{k}_2 e^{i( \tilde{k}_2 -\tilde{k}_1)x}+ \bar{t_2}t_1\tilde{k}_1 e^{i(\tilde{k}_1 -  \tilde{k}_2)x}\Big)+|t_2|^2  \tilde{k}_2\Big]+\nonumber \\
&-\frac{\alpha \hbar^3}{2m^{*2}}\Big[2 \tilde{k}_1^3|t_1|^2+ \mathfrak{Re}\Big((\tilde{k}_1 + \tilde{k}_2) \Big(\tilde{k}_1^2 \bar{t_2}t_1 e^{i( \tilde{k}_1- \tilde{k}_2 )x}    
+  \tilde{k}_2^2 \bar{t_1}t_2 e^{i( \tilde{k}_2 - \tilde{k}_1 )x} \Big) \Big) + 2  \tilde{k}_2^3|t_2|^2\Big] \label{trans_curr}
\end{align}

\noindent
If we set $\Psi_{tot}=e^{ik_1x}+r_1e^{-ik_1x}+r_2e^{-ik_2x}$, the associated current reads  

\begin{align}
J_{tot}=&\frac{\hbar}{m^*} \mathfrak{Im} ((e^{-ik_1x}+\bar{r}_1e^{ik_1x}+\bar{r}_2e^{ik_2x})(ik_1e^{ik_1x}-ik_1r_1e^{-ik_1x}-ik_2r_2e^{-ik_2x}))+\nonumber \\
&+\frac{\alpha \hbar^3}{2m^{*2}} \mathfrak{Im} ((e^{-ik_1x}+\bar{r}_1e^{ik_1x}+\bar{r}_2e^{ik_2x})(-ik_1^3e^{ik_1x}+ik_1^3r_1e^{-ik_1x}+ik_2^3r_2e^{-ik_2x}))+\nonumber \\
&-\frac{\alpha \hbar^3}{2m^{*2}} \mathfrak{Im} ((-ik_1e^{-ik_1x}+ik_1\bar{r_1}e^{ik_1x}+ik_2\bar{r_2}e^{ik_2x})(-k_1^2e^{ik_1x}-k_1^2r_1e^{-ik_1x}-k_2^2r_2e^{-ik_2x}))=\nonumber \\
&=J_{inc}+J_{refl}+J_{extra},
\end{align}

\noindent
where 
\begin{align*}
J_{extra} 
&=\frac{\hbar}{m^*} \mathfrak{Im} \left[ ir_2 e^{-i(k_1+k_2)x}\left(-k_2+\frac{\alpha\hbar^2}{2m^*}\left(k_2^3-k_1k_2^2\right)\right) \right]\\
& +\frac{\hbar}{m^*} \mathfrak{Im} \left[ -i\bar{r}_2\overline{e^{-i(k_1+k_2)x}}\left(-k_1+\frac{\alpha\hbar^2}{2m^*}\left(k_1^3-k_1^2k_2\right)\right) \right].\\
\end{align*}

If we set
$$
r_2 e^{-i(k_1+k_2)x} = c+i d,
$$
after some algebra one has
\begin{align*}
J_{extra} & =\ \frac{\hbar}{m^*} c (k_1-k_2)\left[ 1-\frac{\alpha\hbar^2}{2m^*}\left( k_2^2+k_1^2 \right) \right].
\end{align*}
Since $k_1^2+k_2^2=\dfrac{2m^*}{\alpha\hbar^2}$ (see equation  \ref{equ}) we get $J_{extra}=0$ and therefore 
the conservation of the current 
\begin{equation}
J_{tot} = J_{inc} + J_{refl} = J_{transm}.
\end{equation}

In Fig.\ref{Jtrans} we have plotted the ensemble incident, reflected and transmitted probability current, obtained numerically, versus the wave vector of the incident wave. 

\begin{figure}[H]
    \centering
{
        \includegraphics[width=0.55\textwidth]{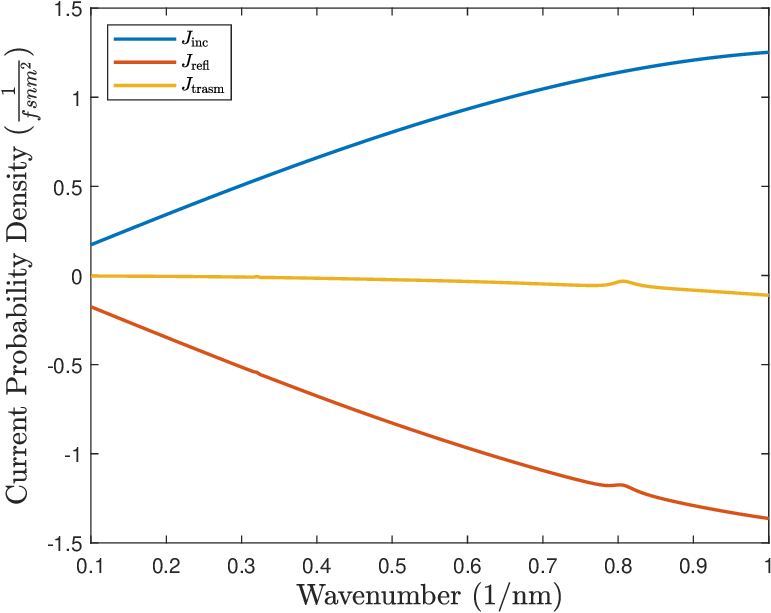}
       }
 \caption{Numerical evaluation of the ensemble incident, reflected and transmitted probability current density versus the wave vector of the incident wave.}
    \label{Jtrans}
\end{figure}
\noindent
\begin{remark} \label{rem:waves} Evanescent modes are absent when all wave vectors
$k_i\in \mathbb{R}$ for any $i = 1,2,3,4$. The wave vectors $k_1$ and $k_2$ satisfy the quartic equation
\begin{align}
ax^4+\frac{\hbar^2}{2m^*}x^2-(qV(x_b)+E)=(x^2-k_1^2)(ax^2+\frac{\hbar^2}{2m^*}+ak_1^2)=0
\label{equ}
\end{align}
with \begin{equation}
E=\frac{\hbar^2}{2m^*}k_1^2+ak_1^4-qV(x_b),
\end{equation}
$ a=-\frac{\hbar^4}{4(m^*)^2}\alpha <0$ and $x_b=0$ if the electron enters from the left contact, $x_b = L$ if the electron enters from the right contact.
\noindent
Equation \ref{equ} admits only real solutions if and only if $\frac{\hbar^2}{2m^*}+ak_1^2>0$, that is, if $ - \sqrt{\dfrac{2m^*}{\alpha \hbar^2}} < k_1<\sqrt{\dfrac{2m^*}{\alpha \hbar^2}}$. 

The other wave vectors are given by
$$
k_{3,4} = \sqrt{-\frac{\hbar^2}{4 m^* a} \pm \frac{1}{2 a} \sqrt{\frac{\hbar^4}{4 (m^*)^2} + 4 a (q V(x_c) + E)}}
$$
where in the complex case the root must be intended with positive real and imaginary parts. Here  
$x_c=0$ if the electron enters from the right contact, $x_b = L$ if the electron enters from the left  contact. Therefore a priori it is not possible to exclude that $k_3$ or $k_4$ lead to evanescent waves. 
\end{remark}


\section{Simulation of a resonant tunneling diode} \label{sec:RTD}
 
 We apply the theoretical framework developed in the previous sections to the case of a RTD, considering the electrostatic potential $V(x)$ proposed in \cite{HeWa} (see Fig.~\ref{fig:rampa}) and also numerically investigated in \cite{AliNaRo}:
 
$$
V(x) = \left\{
\begin{array}{ll}
    0 & \mbox{if } x \in [0, a_1[ \\[0.2cm]
   (x - a_1) \frac{V_L}{a_6 - a_1} &  \mbox{if } x \in [a_1, a_2]\cup \in [a_3, a_4]\cup [a_5, a_6] \\[0.2cm]
  (x - a_1) \frac{V_L}{a_6 - a_1} + V_b & \mbox{if } x \in ]a_2, a_3[ \cup ]a_4, a_5[\\[0.2cm]
V_L & \mbox{if } x \in ]a_6,L]
\end{array}
\right. 
$$
which corresponds to the superposition of a double barrier structure and a linear potential. In our simulations, we set the parameters as follows: $\alpha = $ 0.242 eV$^{-1}$, $a_1 = $ 50 nm, $a_2 =$ 60 nm, $a_3 = $ 65 nm, $a_4 =$ 70 nm, $a_5 = $ 75 nm, $a_6 =$ 85 nm, $L =$ 135 nm,  $V_b =-0.3$ V. 

\begin{figure}[H]
    \centering
{
        \includegraphics[width=0.6\textwidth]{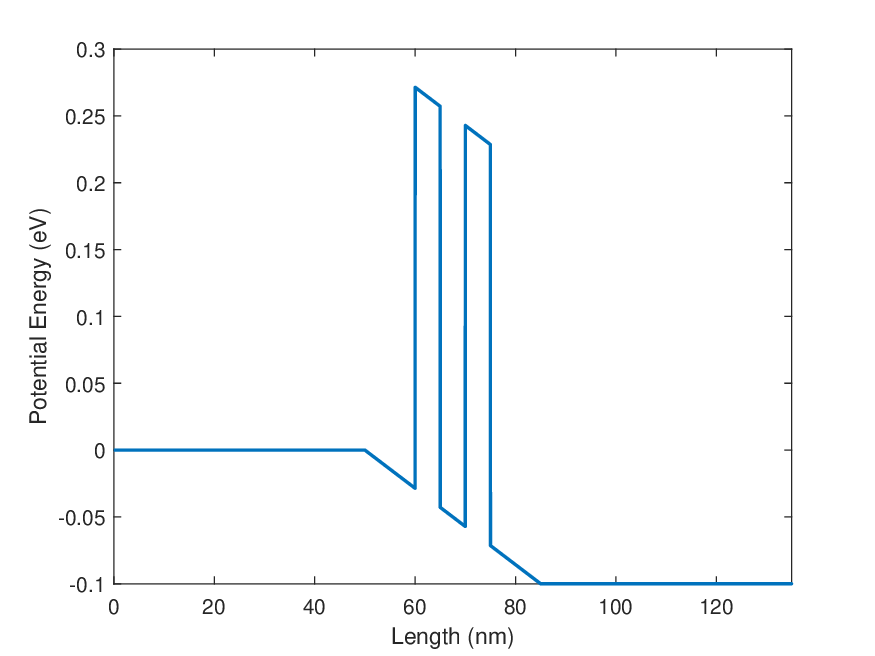}
       }
 \caption{RTD type potential energy in the case $V_0=0$ V, $V_L= 0.1$ V, $V_b = -0.3$ V,  $L=135$ nm. }
    \label{fig:rampa}
\end{figure}

\begin{figure}[H]
    \centering
{
        \includegraphics[width=0.6\textwidth]{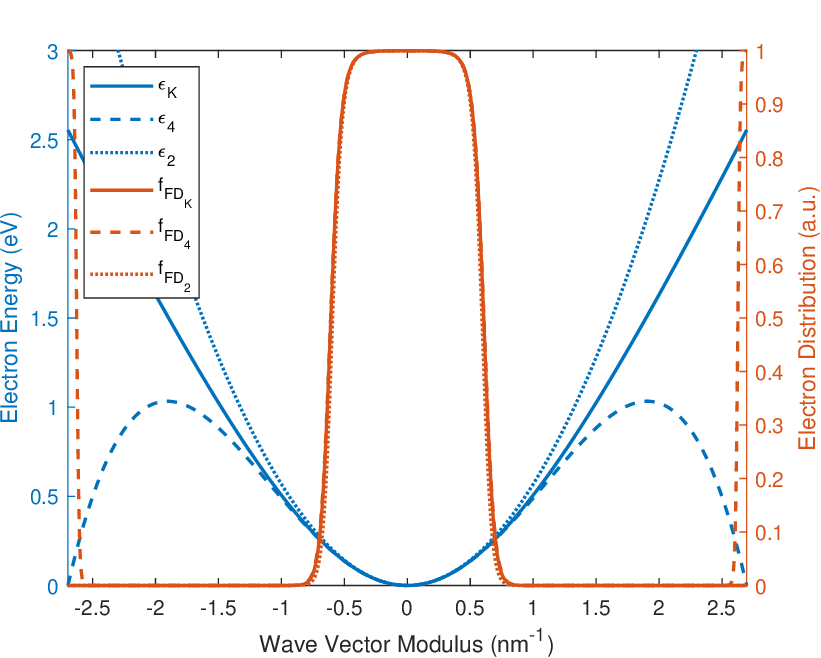}
       }
 \caption{Plot of the full Kane dispersion relation $\epsilon_K$ (continuous line), fourth order approximation $\epsilon_4$ (dashed line) and second order approximation $\epsilon_2$ (dotted line) along with the Fermi-Dirac distribution evaluated with the previous expressions of the energy band, denoted as $f_{FD_K}$, $f_{FD_4}$, $f_{FD_2}$, respectively.  
When the quartic approximation is adopted  spurious tails are present close to $\pm \sqrt{\frac{2 m^*}{\alpha \hbar^2}}$ but they are well outside from the region of physical interest. }
    \label{fig:distribution}
\end{figure}

To evaluate the key quantities relevant for RTD design - namely, the average electron density and current - we consider an incident plane wave with wave vector $k$. The associated density operator in coordinate representation has kernel $\rho_k(x,y) = \Psi_k(x) \overline{\Psi_k(y)}$, where $\Psi_k(x)$ is the solution of the corresponding Schr\"odinger equation.
The kernel of the ensemble density matrix operator is given by
\begin{equation}
\rho (x, y) = \frac{g_s g_v}{(2\pi )^3} \int_{\R^3} f_{FD}(k) \Psi_k (x) \overline{\Psi_k (y)} \, d^3 k,
\end{equation}
where $g_s$ and $g_v$ denote the spin and valley degeneracies, respectively, and $f_{FD}$ is the Fermi-Dirac distribution
$$f_{FD}(k)=\dfrac{1}{1+\exp\left(\frac{E(k)-E_F}{k_BT}\right)},$$
with $E(k)$ being the electron energy, $E_F$ the Fermi energy and $T$ the absolute temperature which we set 300 $K$ (room temperature). 
We observe that 
$$
E(k)-E_F = E(k) - E_C - (E_F - E_C) = \epsilon(k) - \mu
$$
with $\mu = E_F - E_C$ chemical energy which depends on the doping or the injected charge and not on the applied electric field. In the sequel we set 
$\mu = $ 0.2 eV which correspond to a doping of about 7.8$\times$10$^{18}$ cm$^{-3}$ in both contacts.

In principle the Kane approximation is considered valid for any wave vector in $\R^3$. In Fig.~\ref{fig:distribution} 
    we have plotted $\epsilon (k)$ along with $f_{FD}(k)$ versus  the modulus $k$ of the wave vector comparing the  full Kane expression with  the quartic  and quadratic (parabolic) approximation. One observes that from a numerical point of view  $f_{FD}(k)$ exhibits a compact support. Within such a set the full Kane dispersion relation is accurately approximated by the quartic expansion; indeed, also the quadratic approximation is acceptable,  although slightly less precise. Fig.~\ref{fig:distribution} further shows that the quartic approximation is no longer monotone for $k > \sqrt{\dfrac{m^*}{\alpha \hbar^2}}$. The loss of monotonicity is not  problematic since the interval $\left[- \sqrt{\dfrac{m^*}{\alpha \hbar^2}}, \sqrt{\dfrac{m^*}{\alpha \hbar^2}} \right] $ fully contains the support of $f_{FD}(k)$. Spurious tails of the distribution function with the quartic approximation appear  close to $\pm \sqrt{\dfrac{2 m^*}{\alpha \hbar^2}}$ but these lie outside the  domain of numerical integration. 
It is worth noting that, despite the  similarity of the resulting statistics, the parabolic and quartic dispersion relation leads to different Schr\"odinger equations. So one expects different results for the average quantities of interest.
   
The electron density $n(x)$ is obtained as
\begin{equation}
n(x) = \mbox{tr} \ \rho (x, y) = \frac{g_s g_v}{(2\pi)^3} \int_{\R^3} f_{FD}(k) |\Psi_k (x)|^2 \, d^3 k,
\end{equation}
where $\mbox{tr}$ denotes the trace.
\noindent
The total electric current density is given by
$${\cal J}= - q \frac{g_s g_v}{(2\pi )^3}\int_{\R^3} f_{FD}(k)J_{k_x} d^3 k.$$

By adopting cylindrical coordinates: $k_x = k_x$, $k_y = \sigma \cos\theta$, $k_z = \sigma \sin\theta$, with $\sigma \in [0, +\infty[$ and $\theta \in [0, 2\pi[$, the latter becomes
\begin{equation}
{\cal J} = -q \frac{g_s g_v}{(2\pi)^2}\int_{-\infty}^{+\infty}dk_x \int_{0}^{+\infty}\frac{J_{k_x} \sigma}{1+\exp\left( \dfrac{\epsilon(\sigma,k_x)- \mu}{k_BT}\right)}d\sigma, 
\end{equation}
where $\epsilon(\sigma,k_x)=\dfrac{-1+\sqrt{1+\dfrac{2\alpha\hbar^2(\sigma^2+k_x^2)}{m^{*}}}}{2\alpha}$.
\vskip 0.3cm
Figure~\ref{fig:integrand} shows the plot of the integrand function
$$
g(\sigma, k_x) = \sigma \left[1+\exp\left( \dfrac{\epsilon(\sigma,k_x)-\mu}{k_BT}\right)\right]^{-1}.
$$
which numerically exhibits a compact support consistently with the consideration outlined above.
Therefore we restrict the numerical domain to $- \overline{k} \le k_x \le \overline{k}$  
and $0\le \sigma \le  \overline{\sigma}$, 
with $\overline{k} = $1.5 nm$^{-1}$ and 
$\overline{\sigma} = $ 1 nm$^{-1}$. Within this range  the full Kane's dispersion relation and the quartic approximation are in good agreement as shown by Figure~\ref{fig:FD_log}  which in logarithmic scale highlights in detail the differences  between $f_{FD}(k)$ with the full Kane dispersion relation and with the quartic and quadratic approximation.

\begin{figure}[H]
    \centering
{
        \includegraphics[width=0.58\textwidth]{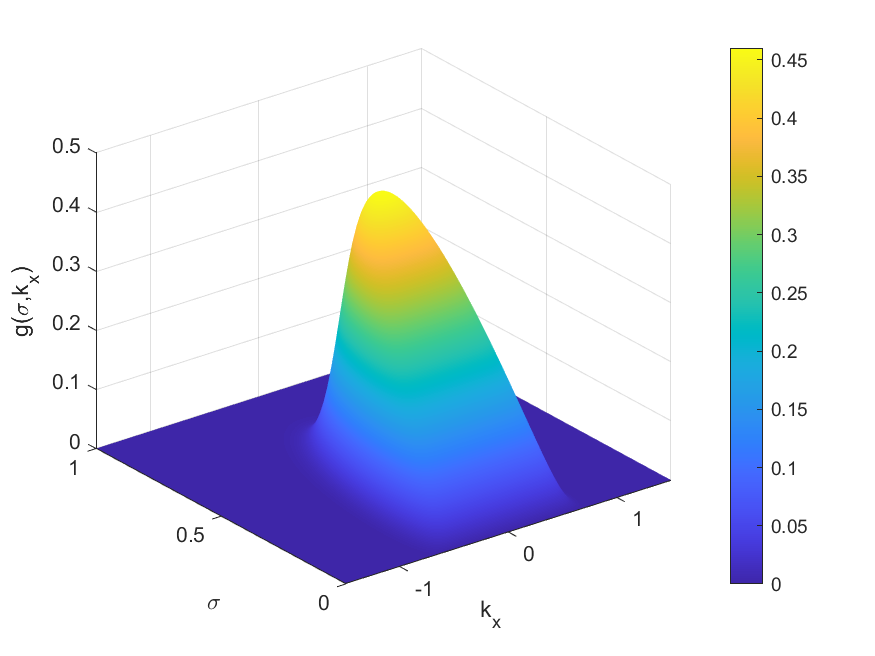}
       }
 \caption{Plot of the function $g (\sigma,k_x)$. One can note that numerically $g (\sigma,k_x)$ has compact support. }
    \label{fig:integrand}
\end{figure}

\begin{figure}[H]
    \centering
{
        \includegraphics[width=0.58\textwidth]{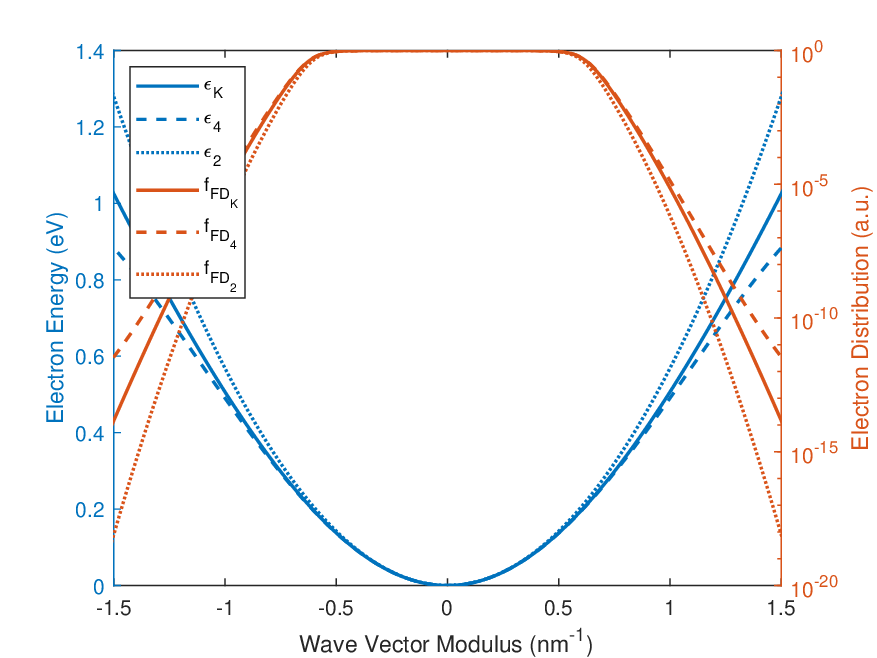}
       }
 \caption{Logarithmic plot of $f_{FD}(k)$ and the energy band with the full Kane dispersion relation and with the quartic and quadratic approximation. The notation is as in Fig.~\ref{fig:distribution}}
    \label{fig:FD_log}
\end{figure}

For $k_x > 0$, the quantity $J_{k_x}$ can be directly inferred from relation~\ref{trans_curr} evaluated at $x = L$. Similar considerations apply for $k_x < 0$.

Using Gaussian quadrature for numerical integration and solving the Schr\"odinger equation at each quadrature node with the scheme proposed in \cite{AliNaRo}, we obtain the results shown in Figures~\ref{fig:den4kane} and~\ref{fig:curr4kane}. Regarding the density while SE2 and SE4 yield qualitatively similar results, quantitative differences are evident: SE2 produces a higher peak in the resonant region for the electron density, whereas SE4 reveals interference effects. The density obtained with SE4 has only a slight difference between the case of the full Kane dispersion and the case of the quartic approximation.

The current is conserved with high numerical accuracy, confirming the robustness of the computational method. The values of the currents obtained with SE4 by adopting  the full Kane dispersion relation and the quartic approximation are about -6.53$\times 10^{-13}$ and -7.43 $\times 10^{-13}$ A/nm$^2$ while the current given by SE2 is about -1.71$\times 10^{-12}$ A/nm$^2$. SE4 yields a value of the current that is  approximately 38\% of that predicted by SE2 similarly to what expected in the semiclassical case where it is well known that the parabolic band overestimates the current (in absolute value). In the case of SE4 the difference in the current between the full Kane dispersion relation and the quartic approximation is to ascribe solely to the slight difference in the statistics.

\begin{figure}[H]
    \centering
{
        \includegraphics[width=0.48\textwidth]{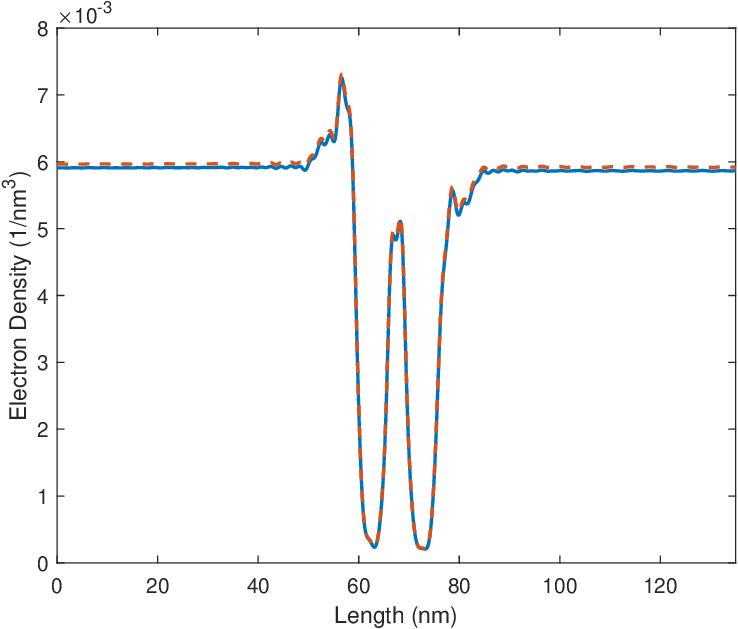} \includegraphics[width=0.48\textwidth]{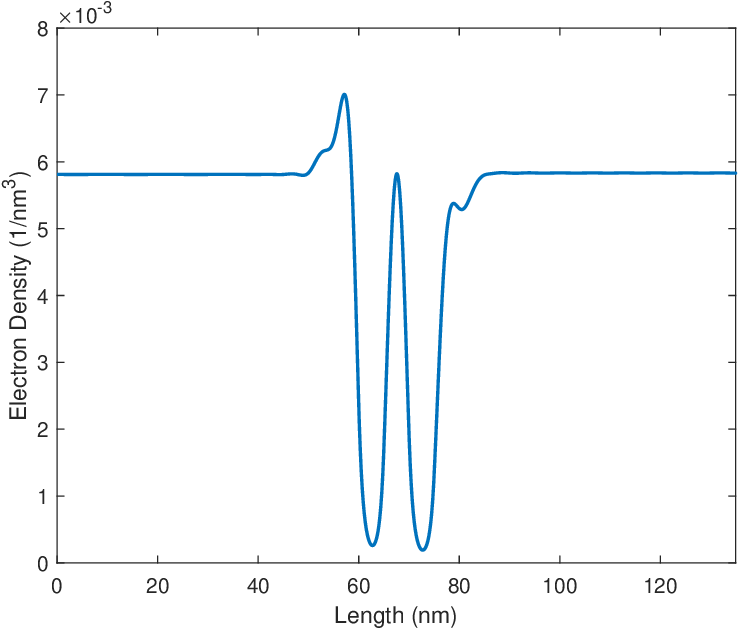}
       }
 \caption{Electron density obtained with the fourth-order SE (continuous line, left) , and the second order SE (right) in the case $V_0=0$ V, $V_L= 0.1$ V, $V_b = -0.3$ V,  $L=135$ nm by using the Kane dispersion relation. For comparison also the electron density with the quartic approximation in the statistics (dashed line, left) is reported}.
    \label{fig:den4kane}
\end{figure}

\begin{figure}[H]
    \centering
{
        \includegraphics[width=0.6\textwidth]{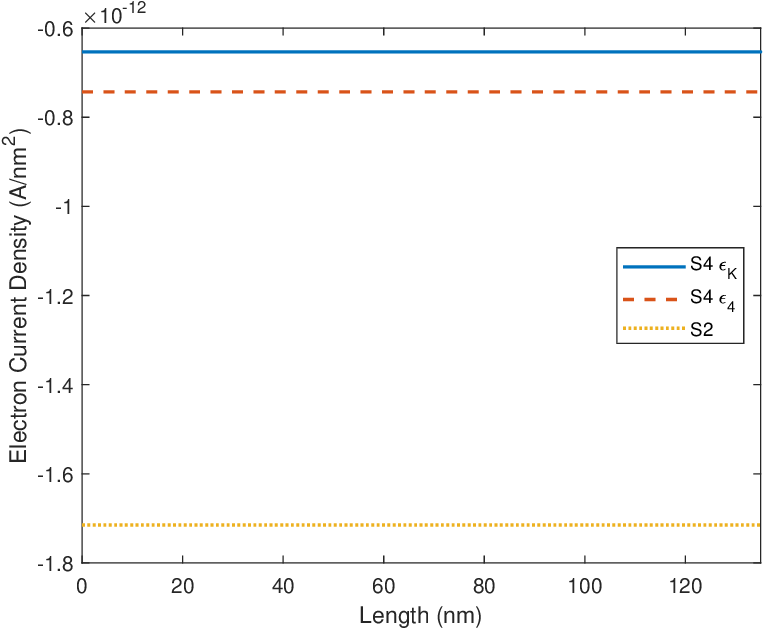} 
       }
 \caption{Electron current density obtained with the fourth-order SE by using the Kane dispersion relation and the second order SE  in the case $V_0=0$ V, $V_L= 0.1$ V, $V_b = -0.3$ V,  $L=135$ nm by using the Kane dispersion relation. For comparison also the electron current density with the quartic approximation in the statistics is reported.} 
    \label{fig:curr4kane}
\end{figure}

For completeness, we also compare results obtained using a parabolic band approximation in the Fermi-Dirac statistics. The corresponding solutions are shown in Figures~\ref{fig:denparab} and~\ref{fig:currparab}. The SE4 density exhibits more pronounced oscillations due to interference effects compared to SE2, and its magnitude is higher than in the Kane model. As in the semiclassical case, the parabolic band approximation tends to overestimate the electron current density in absolute value. SE2 again predicts a higher current than SE4, reinforcing the importance of employing accurate dispersion relations.

\begin{figure}[H]
    \centering
{
        \includegraphics[width=0.48\textwidth]{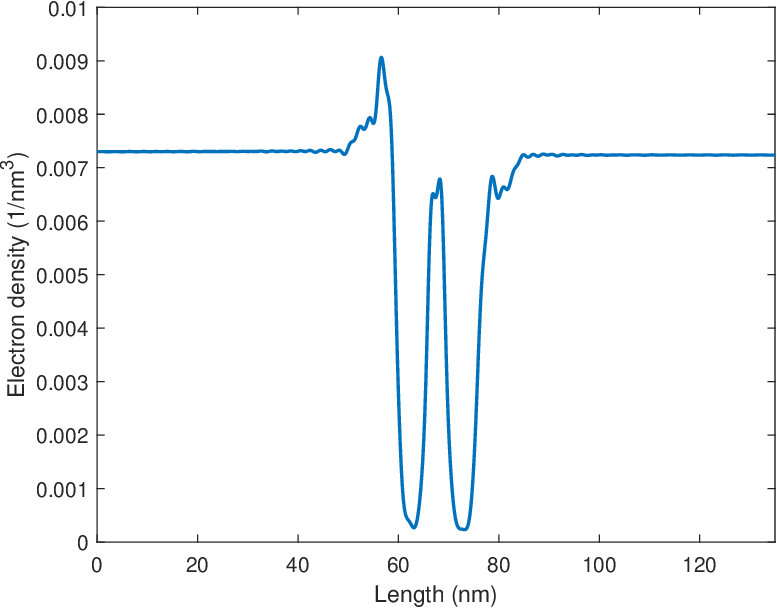} \includegraphics[width=0.48\textwidth]{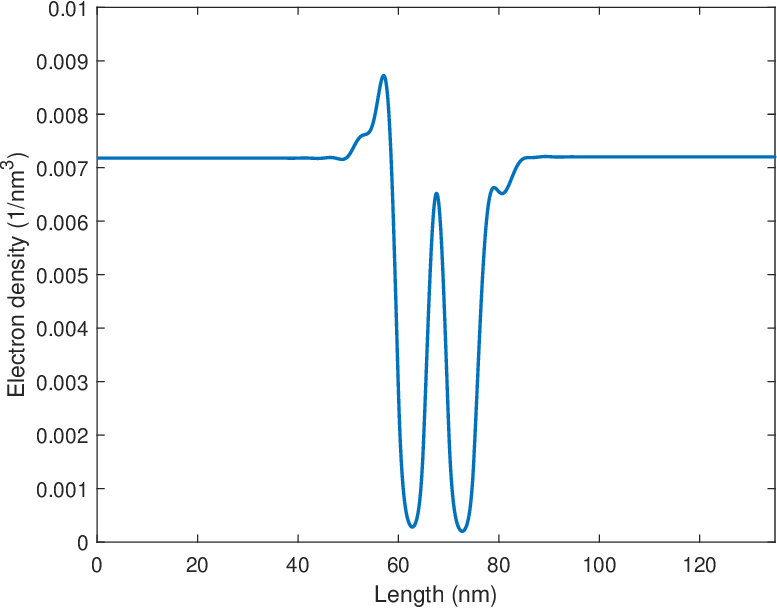}
       }
 \caption{Electron density in the case $V_0=0$ V, $V_L= 0.1$ V, $V_b = -0.3$ V,  $L=135$ nm in the of parabolic band approximation. SE4 left, SE2 right.}
    \label{fig:denparab}
\end{figure}

  \begin{figure}[H]
    \centering
{
        \includegraphics[width=0.7\textwidth]{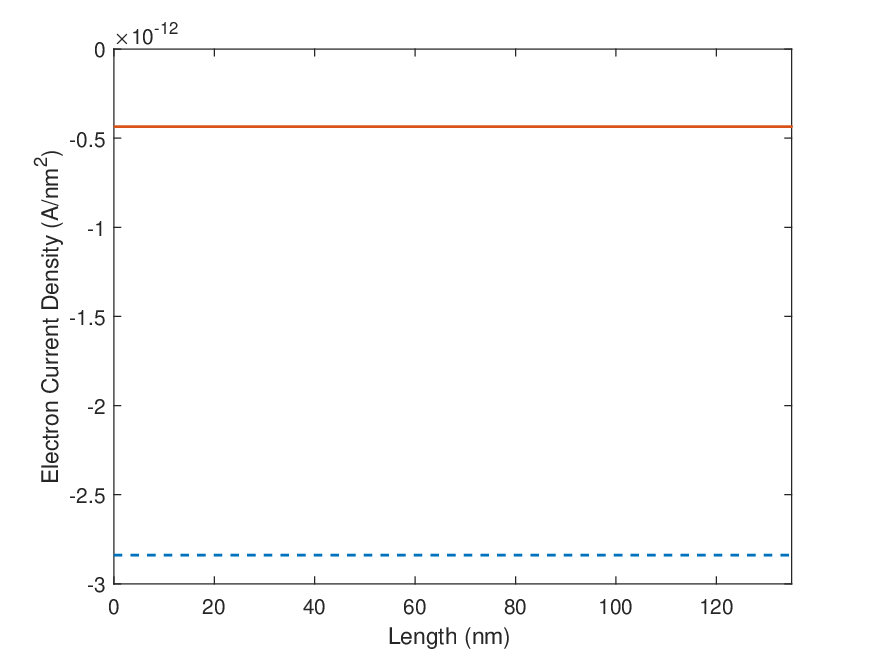} 
       }
 \caption{Electron current density in the case $V_0=0$ V, $V_L= 0.1$ V, $V_b = -0.3$ V,  $L=135$ nm in the parabolic band approximation. SE4: continuous line. SE2: dashed line.}
    \label{fig:currparab}
\end{figure}


\section*{Conclusion}

A general energy band structure has been incorporated into the Schr\"odinger equation in the coordinate representation to model charge transport in nanoscale devices. In particular, the Kane dispersion relation has been analyzed. By expanding in the non-parabolicity parameter, a hierarchy of Schr\"odinger equations of increasing order is obtained. The self-adjointness of the associated Hamiltonians has been rigorously established for electrostatic potentials relevant to nanoelectronic applications. Furthermore, suitable transparent boundary conditions have been formulated, enabling the definition of a boundary value problem over a finite spatial interval for the accurate description of charge transport in electron devices. A generalized expression for the electron current density has also been derived. Numerical simulations for resonant tunneling diodes (RTDs) highlight the capabilities of the proposed model and, in agreement with results from the semiclassical framework, emphasize the necessity of employing more precise dispersion relations to achieve improved evaluations of the electron current density.
\vskip 1cm

\noindent
{\bf Funding.} The authors acknowledge the support from INdAM (GNFM) and from MUR progetto PRIN
{\it Transport phonema in low dimensional structures: models, simulations and theoretical aspects} 
CUP E53D23005900006. V.R. acknowledges also the support from the project PIACERI LINEA 1 {\it TraPlas}, University of Catania.
\vskip 1cm
\noindent
{\bf Conflict of interest.} The authors declare that have not conflict of interest.

\clearpage \noindent

\end{document}